\documentclass[runningheads]{llncs}
\usepackage[utf8]{inputenc}
\usepackage{amsmath, amssymb, thm-restate}
\usepackage{graphicx}
\usepackage[x11names]{xcolor}
\usepackage{todonotes}
\usepackage{complexity}
\usepackage{hyperref}
\usepackage{cite}
\usepackage[export]{adjustbox}
\usepackage{tabularx,multirow,makecell,cellspace,colortbl}
\setlength\cellspacetoplimit{2pt}
\setlength\cellspacebottomlimit{1pt}
\usepackage{pdflscape}
\usepackage{cleveref}
\usepackage{wrapfig}
\usepackage{xpatch}
\usepackage{apptools}
\usepackage{tikz}
\usepackage{subcaption}
\usetikzlibrary{calc}

\usepackage{url,hyperref}
\hypersetup{
	hidelinks,
	colorlinks=true,
	citecolor=[rgb]{0.121 0.47 0.705},
	linkcolor=[rgb]{0.121 0.47 0.705},
	urlcolor=[rgb]{0.121 0.47 0.705}
}

\bibliographystyle{splncs04}

\Crefname{figure}{Fig.}{Figs.}
\Crefname{observation}{Observation}{Observations}

\newsavebox{\smallereximagebox}
\newsavebox{\straightlineimagebox}

\makeatletter
\g@addto@macro\bfseries{\boldmath}
\makeatother

\let\oldendproof\endproof
\renewcommand\endproof{~\hfill$\qed$\oldendproof}
\spnewtheorem{observation}{Observation}{\bfseries}{\itshape}

\graphicspath{{figures/}}

\newcolumntype{Y}{S{>{\centering\arraybackslash}X}}
\newcolumntype{R}{>{\columncolor{LightBlue1}}Y}
\newcolumntype{L}{>{\columncolor{LightGoldenrod1}}Y}

\newboolean{arranging}
\setboolean{arranging}{false}
\newcommand{\aclearpage}{\ifthenelse{\boolean{arranging}}{\clearpage}{}}

\newboolean{showappx}
\newboolean{showappxnec}
\setboolean{showappx}{true}
\setboolean{showappxnec}{true}
\newcommand{\tablevalue}[4][]{\ifthenelse{\boolean{showappx} \OR \(\boolean{showappxnec} \AND \equal{#1}{N}\)}{\ensuremath{#2#3#4}}{\ensuremath{#2}
	}}

\newbox{\myorcidauthbox}
\sbox{\myorcidauthbox}{\large\includegraphics[height=1.7ex]{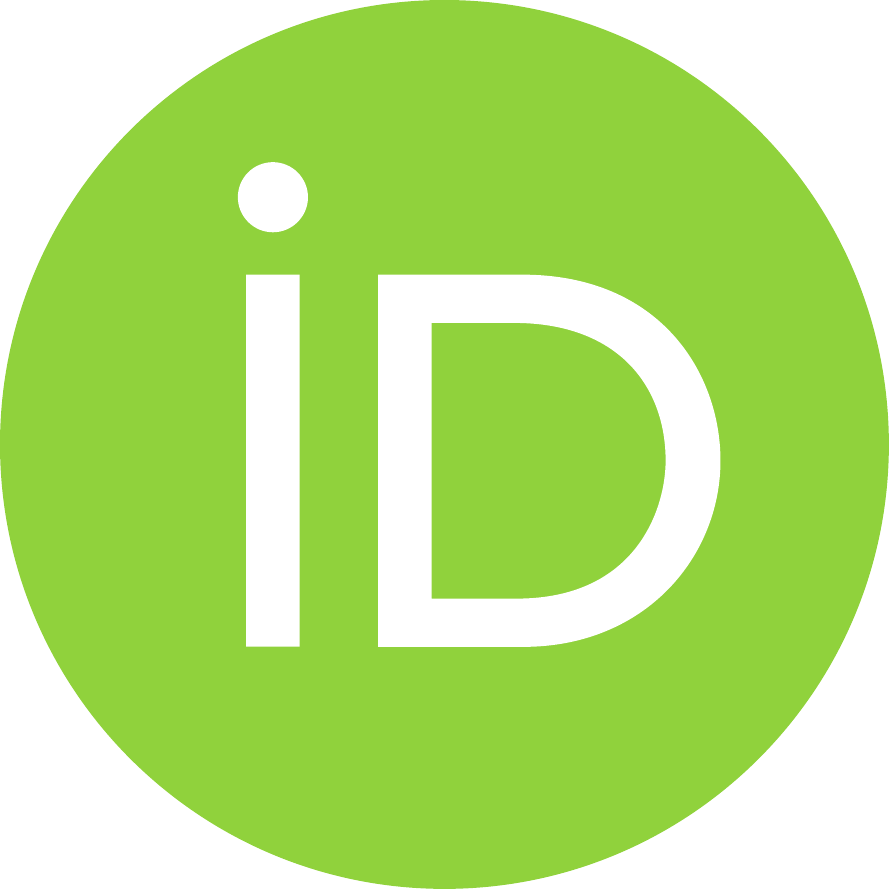}}
\newcommand{\orcid}[1]{\href{https://orcid.org/#1}{\usebox{\myorcidauthbox}}}

\author{Fabian Klute\inst{1}\orcid{0000-0002-7791-3604} \and
	Irene Parada\inst{2}\orcid{0000-0003-3147-0083}}
\authorrunning{F. Klute \and I. Parada}
\institute{Utrecht University, The Netherlands \\
	\email{f.m.klute@uu.nl}\and
	TU Eindhoven, The Netherlands\\
	\email{i.m.de.parada.munoz@tue.nl}}

\title{Saturated $ k $-Plane Drawings with Few Edges}

\begin{document}
	\maketitle
	
	\begin{abstract}
		A drawing of a graph is $k$-plane if no edge is crossed more than $k$ times.
		In this paper we study saturated $k$-plane drawings with few edges.
		These are $k$-plane drawings in which no edge can be added without violating $k$-planarity.
		For every number of vertices $n > k + 1$, 
		we present a tight construction with $\frac{n-1}{k+1}$ edges 
		for the case in which the edges can self-intersect. 
		If we restrict the drawings to be $\ell$-simple we show that 
		the number of edges in saturated $k$-plane drawings must be larger.
		We present constructions with few edges for different values of $k$ and $\ell$.
		Finally, we investigate saturated straight-line $k$-plane drawings.

		\keywords{saturated drawings  \and $k$-planarity \and simple drawings}
	\end{abstract}
	
	\section{Introduction}
	Given a simple undirected graph $G$ on $n$ vertices, 
	a \emph{drawing} of $G$ in the plane maps 
	each vertex of $G$ to a distinct point and
	each edge to a curve connecting the two corresponding points.
	In the literature, it is usually required that the edges are drawn as Jordan arcs. 
	However, in order to understand the influence of self-intersections in our study, 
	we also consider drawings in which the edges 
	are allowed to self-intersect. 
	In all our drawings we forbid edges to pass through vertices,
tangencies, and 
	three edges to properly cross in the same point.

	A drawing is \emph{$k$-plane} if no edge is properly crossed more than $k$ times. These drawings have received a lot of attention in the graph drawing community~\cite{didimoSurveyGraph2019,kobourovAnnotatedBibliography2017}.  
In this paper we study \emph{saturated} $k$-plane drawings, i.e.,
	$k$-plane drawings to which no edge can be added without violating the $k$-planarity of the drawing.

	A drawing is \emph{$\ell$-simple} if no edge crosses itself and two edges share at most $\ell$ points including their endpoints.
	Kyn\v{c}l et al.~\cite{kynclSaturatedSimple2015} initiated the study of saturated  $\ell$-simple drawings, i.e.
	$\ell$-simple drawings to which no edge can be added without violating the $\ell$-simplicity of the drawing.
They presented bounds on the minimum number of edges in saturated $\ell$-simple drawings. 
Recently the bounds for simple and 2-simple drawings were improved by Hajnal et al.~\cite{hajnalSaturatedSimple2018}.
Saturated drawings with few edges have also been studied by Aichholzer et al.~\cite{aichholzerEdgeVertexRatio2019} in the context of \emph{thrackles}, 
	that are drawings in which every pair of edges intersects exactly once, 
	either at a proper crossing or at a common endpoint. 
	
In this paper we study the minimum number of edges in saturated $k$-plane drawings 
	and put it in relation with different drawing restrictions, including $\ell$-sim\-plic\-i\-ty. 
	Previously, saturated simple $k$-plane drawings have been studied for $k=1$ and $2$.
	Brandenburg et al.~\cite{brandenburgDensityMaximal2013} showed 
	that every saturated simple $1$-plane drawing has at least $2.1n$ edges and 
	that there are arbitrarily large saturated simple $1$-plane drawings with $\frac{7n}{3}$ edges.
	For this problem, Bar{\'a}t and T{\'o}th~\cite{baratImprovementsDensity2018} recently improved the lower bound to $\frac{20n}{9}$.
	Auer et al.~\cite{auerSparseMaximal2013} gave a saturated simple $2$-plane drawing
	with at most $\frac{4n}{3}$ edges.
	Extending this research to larger values of $k$ was suggested by T{\'o}th as an interesting question~\cite[Section~$3.2$~Problem~$15$]{hongBeyondPlanarGraphs2017}.

\begin{table}[t]
			\caption{New and known bounds for the infimum ratio of edges to vertices for arbitrarily large saturated $k$-plane drawings (disregarding lower order terms). 
 			Numbers in the table are upper bounds while the colors represent the lower bounds. For straight-line drawings no lower bounds are claimed. Arrows indicate where the best known bound for higher simplicity coincides with the one for lower simplicity.
 			Citations and references to theorems point to where the result is proven.
	}
	\newcommand{\tabequal}{\multirow{-2}{*}{$\rightarrow$}}		
	\begin{tabularx}{\textwidth}{ X|R|*{5}{L}|L }			
	\hline
	\rowcolor{white}
	\makecell[l]{Upper\\ bounds} & 
	\makecell{with self-\\crossings} & 
	\makecell{$k+1$-\\simple} & 
	\makecell{$4$-simple} & 
	\makecell{$3$-simple} & 
	\makecell{$2$-simple} & 
	\makecell{simple} & 
	\makecell{straight-\\line} \\
	\hline
	
	\multirow{2}{*}{$1$-plane} & 
	$\tablevalue{\frac{1}{2}}{=}{0.5}$ & 
	&
	&
	&
	\makecell{$\tablevalue{\frac{3}{2}}{=}{1.5}$} & 
	\cellcolor{Goldenrod1}\makecell{$\tablevalue[N]{\frac{7}{3}}{\approx}{2.33}$} & 
	\cellcolor{white}\makecell{$\tablevalue{\frac{11}{5}}{=}{2.2}$} \\
	
	&
	\makecell{Thm.\ref{thm:selfcrossings}}&
	\tabequal & 
	\tabequal & 
	\tabequal & 
	\makecell{Thm.\ref{thm:lowsimplicity}} &
	\cellcolor{Goldenrod1}\makecell{\cite{brandenburgDensityMaximal2013};Thm.\ref{thm:kpsimple}} &
	\cellcolor{white}\makecell{Thm.\ref{thm:kpsl}}
	\\
	\hline
	
	\multirow{2}{*}{$2$-plane} & 
	$\tablevalue{\frac{1}{3}}{\approx}{0.33}$ & 
	&
	&
	$\tablevalue{\frac{2}{3}}{\approx}{0.66}$ & 
	$\tablevalue{\frac{4}{5}}{=}{0.8}$ & 
	\makecell{$\tablevalue{\frac{4}{3}}{\approx}{1.33}$} & 
	\cellcolor{white}$\tablevalue{\frac{3}{2}}{=}{1.5}$ \\
	
	&
	\makecell{Thm.\ref{thm:selfcrossings}}&
	\tabequal & 
	\tabequal & 
	\makecell{Thm.\ref{thm:noselfcrossings}} &
	\makecell{Thm.\ref{thm:lowsimplicity}} &
	\makecell{\cite{auerSparseMaximal2013};Thm.\ref{thm:kpsimple}} &
	\cellcolor{white}\makecell{Thm.\ref{thm:kpsl}}
	\\
	\hline
	
	\multirow{2}{*}{$3$-plane} & 
	$\tablevalue{\frac{1}{4}}{=}{0.25}$ & 
	&
	$\tablevalue{\frac{2}{3}}{\approx}{0.66}$ & 
	\makecell{$\tablevalue{\frac{3}{4}}{=}{0.75}$} & 
	\makecell{$\tablevalue{\frac{4}{5}}{=}{0.8}$} & 
	$1$ & 
	\cellcolor{white}$\tablevalue{\frac{7}{6}}{=}{1.2}$ \\
	
	&
	\makecell{Thm.\ref{thm:selfcrossings}}&
	\tabequal & 
	\makecell{Thm.\ref{thm:noselfcrossings}}&
	\makecell{Thm.\ref{thm:smaller}}&
	\makecell{Thm.\ref{thm:lowsimplicity}} &
	\makecell{Thm.\ref{thm:kpsimple}} &
	\cellcolor{white}\makecell{Thm.\ref{thm:kpsl}}
	\\
	\hline
	
	\multirow{2}{*}{$4$-plane} & 
	$\tablevalue{\frac{1}{5}}{=}{0.2}$ & 
	&
	\makecell{$\tablevalue{\frac{2}{5}}{=}{0.4}$} & 
	\makecell{$\tablevalue[N]{\frac{3}{7}}{\approx}{0.43}$} & 
	$\tablevalue[N]{\frac{6}{13}}{\approx}{0.46}$ & 
	$\tablevalue{\frac{2}{3}}{\approx}{0.66}$ & 
	\cellcolor{white}$ 1 $ \\
	
	&
	\makecell{Thm.\ref{thm:selfcrossings}}&
	\tabequal & 
	\makecell{Thm.\ref{thm:smaller}}&
	\makecell{Thm.\ref{thm:smaller}}&
	\makecell{Thm.\ref{thm:lowsimplicity}} &
	\makecell{Thm.\ref{thm:kpsimple}} &
	\cellcolor{white}\makecell{Thm.\ref{thm:kpsl}}
	\\
	\hline
	
	\multirow{2}{*}{$k$-plane} & 
	\makecell{$\frac{1}{k+1}$} & 
	\makecell{$\frac{1}{\left\lfloor\frac{k}{2}\right\rfloor + \frac{1}{2}}$} & 
	& 
	& 
	\makecell{$\frac{2}{k + \frac{2}{k+2}}$} & 
	\makecell{$\frac{2}{k-1}$} & 
	\cellcolor{white}\makecell{$\frac{4k + 2}{k^2+2}$} \\
	
	&
	\makecell{Thm.\ref{thm:selfcrossings}}&
	\makecell{Thm.\ref{thm:noselfcrossings}} &
	\tabequal &
	\tabequal &
	\makecell{Thm.\ref{thm:lowsimplicity}} &
	\makecell{Thm.\ref{thm:kpsimple}} &
	\cellcolor{white}\makecell{Thm.\ref{thm:kpsl}}
	\\
	\hline
	\end{tabularx}
	\begin{tikzpicture}
\coordinate (T) at (0,1);
		\coordinate (A) at (1.5,1);
		\coordinate (B) at ($(A) + (2.5,0)$);
		\coordinate (C) at ($(B) + (2.5,0)$);
		\node at (T) {\normalsize Lower bounds:};
		\node (rect) at (A) [fill,LightBlue1,minimum width=.5cm,minimum height=.3cm] {};
		\node (rect) at (B) [fill,LightGoldenrod1,minimum width=.5cm,minimum height=.3cm] {};
		\node (rect) at (C) [fill,Goldenrod1,minimum width=.5cm,minimum height=.3cm] {};	
		
		\node at ($(A) + (1.05,0)$) {$\frac{1}{k+1}$~Thm.\ref{thm:selfcrossings}};
		\node at ($(B) + (1.05,0)$) {$\frac{2}{k+2}$~Thm.\ref{thm:noselfcrossings}};
		\node at ($(C) + (1.15,0)$) {$\frac{20}{9} \approx 2.22$~\cite{baratImprovementsDensity2018}};	
	\end{tikzpicture}
		\label{tbl:overview}
	\end{table}

	\paragraph{Relation to independent similar work.}
	After submission of the first version of our work, we learned that, independently,
	Chaplick et al.~\cite{chaplickEdgeMinimumSaturated20} had investigated saturated drawings of sparse $k$-planar (multi-)graphs.
The main difference is that their setting allows for multiple parallel edges and
	therefore no two vertices can lie in the same cell regardless of whether they are connected in the graph.
	This leads to different results, even though the constructions and proof techniques are similar.

	One of the closest results is for saturated $k$-plane drawings without further restrictions. 
	Using a similar construction to ours in Section~\ref{sec:norestrictions},
	they also obtain a tight bound in their non-restricted setting.
	In their conclusion they describe and illustrate a construction obtaining an $\frac{1}{k+1}(n-1)$ upper bound for $k$-plane drawings of simple graphs without restricting the simplicity. 
	Their construction only applies for even values of $k$ and matches the bound 
	that we obtain in Section~\ref{sec:noselfcrossings}.

	After reading their paper, we included a new section, Section~\ref{sec:matchings},
	resolving an open question asked by Chaplick et al. on
	the existance of saturated simple $k$-plane drawings of matchings for $k = 4$,~$5$, and~$6$
	when inserting parallel edges is allowed.
	We show that in this setting no saturated simple $4$- and $5$-plane drawings of matchings exist. 
	In contrast, we provide a construction for saturated simple $6$-plane drawings of arbitrarily large matchings
	in which no (parallel) edge can be inserted.

\paragraph{Our results.}
We study the infimum ratio of edges to vertices for arbitrarily large saturated $k$-plane drawings. 
	For different restrictions on the drawing, we present both lower bounds and constructions with few edges that give upper bounds. 
	\Cref{tbl:overview} summarizes our results. 
In \Cref{sec:norestrictions} we show that if we allow self-intersecting edges 
there are saturated $k$-plane drawings with $n$ vertices and $\frac{n-1}{k+1}$ edges. 
	Moreover, we prove that the edge-vertex ratio in this construction is the smallest possible.
	We begin our paper by encapsulating a helpful operation for our constructions, called \emph{stashing},
	in Section~\ref{sec:stashing} 
	In \Cref{sec:noselfcrossings} we show that if we disallow self-intersections we need at least two times the amount of edges: 
	A saturated $k$-plane drawing with $n$ vertices has at least $\frac{2(n-1)}{k+1}$ edges. 
	In this setting, we present a construction with $n$ vertices and $\frac{2(n-1)}{k+2}$ edges. 
	However, this construction requires a $k+1$-simple drawing.
	In \Cref{sec:restricting} we focus on drawings with low simplicity. 
	For 2-simple drawings we give a construction with $n$ vertices and less than $\frac{2(n-1)}{k}$ edges. 
	Then, adapting a construction by Auer et al.~\cite{auerSparseMaximal2013}, 
	we obtain simple drawings with $n$ vertices and $\frac{2n}{k-1} + O(k)$ edges.  
	In \Cref{sec:straightline} we consider straight-line drawings 
	and present a construction with $n$ vertices and $\frac{4k + 2}{k^2+2}(n-1)$ edges.
	Finally, in \Cref{sec:matchings} we study saturated simple $k$-plane drawings of matchings
	in the setting in which inserting parallel edges must also be prevented.
	Our constructions close the gap left open by Chaplick et al.~\cite{chaplickEdgeMinimumSaturated20}.

\section{Stashing} \label{sec:stashing}
	Several proofs throughout the paper use stashing into free cells to construct upper bounds 
	to the minimum edge-vertex ratio in a saturated $k$-plane drawing. 
	The next lemma provides a tool to more easily compute this bound when we stash isolated 
	vertices into free cells; compare \Cref{thm:selfcrossings}.
	\begin{restatable}{lemma}{lemstashing}
		\label{lem:stashing}
		For $k>1$, let $D_0$ be a saturated $k$-plane drawing with $n_0$ vertices, $m_0$ edges, and
		$f_0 > 0$ free, non-intersecting cells, then there exist arbitrarily large saturated $k$-plane drawings on $n$ vertices with
		$m = \frac{m_0}{n_0+f_0- 1}(n-1)$ edges.
\end{restatable}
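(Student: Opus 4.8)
The plan is to build the large drawings by iterating a nesting-and-stashing step on $D_0$, so I would first isolate the two properties of a free, non-intersecting cell $F$ that the construction relies on. The \emph{non-intersecting} property gives that a sufficiently scaled-down copy of any $k$-plane drawing can be placed entirely in the interior of $F$ without crossing any edge of the surrounding drawing. The \emph{free} property gives the barrier I need for saturation: any curve with one endpoint inside $F$ and its other endpoint outside $\overline{F}$ must cross some edge that is already crossed $k$ times, so it cannot be added; in particular an isolated vertex placed inside $F$ cannot be connected to any vertex outside $F$. With these in hand, the construction takes $t$ copies $C_1,\dots,C_t$ of $D_0$: for $1\le i<t$ I place $C_{i+1}$, scaled to fit, inside one designated free cell of $C_i$ and stash one isolated vertex into each of the remaining $f_0-1$ free cells of $C_i$; in the innermost copy $C_t$ I stash isolated vertices into all $f_0$ free cells. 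By the non-intersecting property each $C_{i+1}$ lies strictly inside a crossing-free cell of $C_i$, so edges of different copies never meet and every edge is crossed exactly as often as in $D_0$, hence at most $k$ times; the stashed vertices carry no edges, so the whole drawing is $k$-plane. Counting, the original vertices contribute $tn_0$, the stashed vertices contribute $(t-1)(f_0-1)+f_0 = t(f_0-1)+1$, and the edges contribute $tm_0$, so $n = t(n_0+f_0-1)+1$ and $m = tm_0$, giving $m = \frac{m_0}{n_0+f_0-1}(n-1)$, and letting $t\to\infty$ produces arbitrarily large such drawings.

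Next I would prove that the drawing is saturated by induction on $t$. In the base case $t=1$ the drawing is $C_1=D_0$ with an isolated vertex stashed in each free cell: since $D_0$ is saturated no edge between its vertices can be added, and by the free property no stashed vertex can be connected to anything. For the inductive step, observe that the sub-drawing contained in the hosting cell of $C_1$ is itself an instance of the construction with $t-1$ levels, hence saturated by the induction hypothesis. A hypothetical new edge $e=uv$ then falls into one of three cases: if both endpoints lie inside the hosting cell, $e$ is blocked by the induction hypothesis; if $e$ crosses the boundary of the hosting cell (in particular if exactly one endpoint lies inside it, or if either endpoint is a stashed vertex whose cell must be exited), it is blocked by the free property; and if both endpoints are vertices of $C_1$, then $e$ cannot dip into any cell without crossing a saturated boundary edge, so it must stay in the face structure of $C_1$, where it is blocked because $D_0$ is saturated.

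The step I expect to be the main obstacle is making this saturation argument airtight, namely maintaining the invariant that every designated cell used in the recursion is still free and non-intersecting in the combined drawing. Concretely, I must check that placing $C_{i+1}$ and the stashed vertices inside cells of $C_i$ neither creates new uncrossed boundary edges that would reopen a barrier, nor opens up new free space inside $C_i$'s own face structure that would permit an edge forbidden in $D_0$; and I must confirm that the cells of each copy that host the next level remain free so that the free property can be invoked across copy boundaries. Verifying that the local free and non-intersecting properties are preserved globally under arbitrarily deep nesting, and that the reduction to $D_0$'s saturation and to the induction hypothesis is exhaustive, is where the argument needs the most care.
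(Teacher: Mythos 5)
Your construction is exactly the paper's: nest $t$ copies of $D_0$, each placed in one designated free cell of the previous copy, stash $f_0-1$ isolated vertices into the remaining free cells of each copy plus one extra vertex at the innermost level, yielding $n=(n_0+f_0-1)t+1$ and $m=tm_0$. The proposal is correct and follows essentially the same approach; your explicit induction for saturation and the discussion of preserving freeness under nesting are more detailed than the paper's proof, which simply asserts the construction works.
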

	\begin{proof}
		Let $D_0$ be the initial saturated $k$-plane drawing.
		The drawing $D_1$ is defined by stashing $k-1$ isolated vertices into $k-1$ free cells of $D_0$. 
		Note that $D_1$ has one free cell. 
		Let $t > 1$ be an integer and $D_t$ the drawing obtained 
		by stashing $D_1$ into the only free cell of $D_{t-1}$. 
		For all $t\ge 1$ we define $D'_t$ to be the drawing resulting from stashing one vertex into the only free cell of $D_t$.
		Let $n$ be the number of vertices in $D'_t$ with $t\ge 1$ 
		and $m$ the number of edges.
		We obtain that $m = tm_0$ and $n = n_0t + (f_0-1)t + 1 = (n_0 + f_0 - 1)t + 1$. 
		Rearranging gives us that $t = \frac{n - 1}{n_0 + f_0 - 1}$ and consequently
		$m = \frac{m_0(n-1)}{n_0 + f_0 - 1}$.
\end{proof}
	
	It can also be helpful to not stash isolated vertices,
	but instead replace them with two new vertices and a new edge completely drawn inside a free cell.
	The following lemma establishes the resulting upper bound and shows that using new edges instead of isolated vertices
	leads to a lower edge-vertex ratio if and only if $m_0 > n_0 + f_0 - 1$.
	In other words if and only if the edge-vertex ratio resulting from stashing isolated vertices exceeds one.
	
	\begin{lemma}
		\label{lem:stashingedge}
		For $k>1$, let $D_0$ be a saturated $k$-plane drawing with $n_0$ vertices, $m_0$ edges, and $f_0 > 0$ free cells,
		there are arbitrarily large saturated $k$-plane drawings on $n$ vertices with $m = \frac{m_0 + f_0 - 1}{n_0 + 2f_0 - 2}(n-1)$ edges
		and if $m_0 > n_0 + f_0 - 1$ this drawing has less edges than a saturated $k$-plane drawing obtained from $D_0$
		with Lemma~\ref{lem:stashing}.
	\end{lemma}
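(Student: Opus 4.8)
The plan is to mirror the construction in the proof of \Cref{lem:stashing}, replacing each stashed isolated vertex by a stashed \emph{edge}, that is, two new vertices joined by a curve drawn entirely inside a free cell. Concretely, I would first build a base unit $E_1$ by taking $D_0$ and, into $f_0-1$ of its free cells, drawing two fresh vertices connected by an edge contained in that cell; the remaining free cell is left untouched, so that $E_1$ again has exactly one free cell. Each such insertion adds two vertices and one edge, so $E_1$ has $n_0 + 2(f_0-1)$ vertices, $m_0 + (f_0-1)$ edges, and one free cell. As in \Cref{lem:stashing}, I would then define $E_t$ by stashing a copy of $E_1$ into the unique free cell of $E_{t-1}$ and finally place one isolated vertex in the last free cell. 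Counting gives $n = (n_0 + 2f_0 - 2)t + 1$ and $m = (m_0 + f_0 - 1)t$; eliminating $t$ yields $m = \frac{m_0 + f_0 - 1}{n_0 + 2f_0 - 2}(n-1)$, as claimed.

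The correctness of the construction has two parts: $k$-planarity and saturation. For $k$-planarity, each stashed edge lies inside a single free cell and is therefore never crossed, and the recursively stashed copies are scaled into a cell, so their edges only cross edges of the same copy; since $D_0$, and hence every copy, is $k$-plane, the whole drawing is $k$-plane. For saturation, the key point---and the place I expect the main work---is to argue that connecting the two new vertices inside a cell leaves no insertable edge. Here I would reuse exactly the reason that makes stashing work in \Cref{lem:stashing}: a free cell is deep enough that any curve leaving it must cross more than $k$ edges, so neither of the two new vertices can be joined to any vertex outside its cell; and since the two vertices lie in the same cell and are already adjacent, no edge between them can be added either. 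As $D_0$ was saturated and all new vertices are mutually unreachable across cells, the drawing remains saturated.

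For the comparison, I would simply compare the two edge-to-vertex ratios. Writing the ratio from \Cref{lem:stashing} as $\frac{m_0}{n_0+f_0-1}$ and the new one as $\frac{m_0+f_0-1}{n_0+2f_0-2}$, the inequality $\frac{m_0+f_0-1}{n_0+2f_0-2} < \frac{m_0}{n_0+f_0-1}$ is equivalent, after cross-multiplying and cancelling the common term $m_0(n_0+f_0-1)$, to $(f_0-1)(n_0+f_0-1) < m_0(f_0-1)$. For $f_0 > 1$ this reduces to $m_0 > n_0 + f_0 - 1$, which is precisely the stated condition; for a fixed admissible $n$ the smaller ratio means strictly fewer edges. (When $f_0 = 1$ there are no filler cells and the two constructions coincide.) The only genuine obstacle is the saturation argument for the paired vertices; everything else is the same bookkeeping as in \Cref{lem:stashing} together with an elementary mediant-type inequality.
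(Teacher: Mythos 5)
Your proposal is correct and follows essentially the same route as the paper: build a one-free-cell unit by filling $f_0-1$ free cells with a two-vertex edge, recurse, and derive the threshold $m_0 > n_0 + f_0 - 1$ by cross-multiplying the two ratios. You in fact supply details the paper elides (the saturation argument for the paired vertices, the $f_0=1$ degenerate case, and the correct count of $f_0-1$ rather than the paper's misprinted $k-1$ filled cells), so no changes are needed.
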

	\begin{proof}
		Let $D_0$ be the initial saturated $k$-plane drawing.
		The drawing $D_1$ is defined by stashing $k-1$ new edges connecting two new vertices into $k-1$ free cells of $D_0$. 
		Note that $D_1$ has one free cell. 
		Let $t > 1$ be an integer and $D_t$ the drawing obtained 
		by stashing $D_1$ into the only free cell of $D_{t-1}$. 
		For all $t\ge 1$ we define $D'_t$ to be the drawing resulting from stashing one vertex into the only free cell of $D_t$.
		Let $n$ be the number of vertices in $D'_t$ with $t\ge 1$ 
		and $m$ the number of edges.
		We obtain that $m = t(m_0 + f_0 - 1)$ and $n = n_0t + 2(f_0-1)t + 1 = (n_0 + 2f_0 - 2)t + 1$. 
		Rearranging gives us that $t = \frac{n - 1}{n_0 + 2f_0 - 2}$ and consequently
		$m = \frac{m_0 + f_0 - 1(n-1)}{n_0 + 2f_0 - 2}$.
		
		For the second part of the lemma let $D_S$ be a saturated $k$-plane drawing on $n$ vertices with $m_S$ edges
		obtained from $D_0$ with Lemma~\ref{lem:stashing} and
		$D_I$ a saturated $k$-plane drawing obtained as above with $n$ vertices and $m_I$ edges.
		We are interested when $m_I > m_S$:
		\begin{eqnarray*}
			\frac{m_0}{n_0 + f_0 - 1}(n-1) &>& \frac{m_0 + f_0 - 1}{n_0 + 2f_0 - 2}(n-1) \\
			m_0(n_0 + 2f_0 - 2) &>& (n_0+f_0-1)(m_0 + f_0 -1) \\			 
			m_0f_0 - m_0 &>&  f_0n_0 - n_0 + f_0^2- 2f_0  + 1 \\
			m_0 &>&  n_0 + f_0 - 1.
		\end{eqnarray*}
	\end{proof}

	\section{Allowing Self-Intersections}
	\label{sec:norestrictions}
	In this section we consider $k$-plane drawings without any additional restrictions.
	In particular, we allow edges to self-intersect.
	As a consequence, the boundary of one cell may consist of only one crossing and one edge segment. 
	If the edge to which this edge segment belongs has $k$ crossings, 
	we can use such a cell to place a vertex in it. 
	In general, in a $k$-plane drawing we say that an edge is \emph{saturated} if it is crossed $k$ times. 
	A cell without vertices on its boundary and 
	bounded only by segments of saturated edges is called \emph{free}.
	In a saturated $k$-plane drawing a vertex in a free cell will be isolated. 
	In this way we can produce a $k$-plane drawing with low edge-vertex ratio: 
	making an edge self-intersect $k$ times and placing an isolated vertex in every free cell
	we obtain a saturated $k$-plane drawing with one single edge and $k+2$ vertices. 
	
	To be able to produce arbitrarily large drawings, 
	we can use one free cell 
	and, instead of placing a vertex, recursively place the construction. 
	In general, given a saturated $k$-plane drawing $D$ with a free cell $c$, 
	\emph{stashing} a drawing $D'$ into $c$ refers to producing a new drawing 
	that consists of the union of the two drawings, with $D'$ drawn inside $c$.

	\begin{figure}[t]
		\centering
		\includegraphics[page=1]{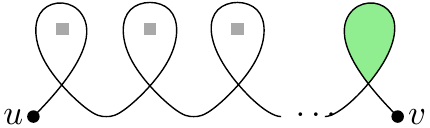}
		\caption{One edge with self-intersections. Gray squares represent isolated vertices that we stash and the green cell is used for recursively stashing the construction.}
		\label{fig:selfcrossing}
	\end{figure}
	
 	\begin{theorem}
		\label{thm:selfcrossings}
		There are arbitrarily large drawings on $n$ vertices with $\frac{n-1}{k+1}$ edges,
		which are saturated $k$-plane. Moreover, this bound is tight.
		
\end{theorem}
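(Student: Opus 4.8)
The plan is to prove the two directions separately: a construction realizing $\frac{n-1}{k+1}$ edges, and a matching lower bound showing that no saturated $k$-plane drawing can do better.

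For the construction I would build one base gadget and feed it into \Cref{lem:stashing}. Take $D_0$ to be a single edge $uv$ drawn so that it crosses itself exactly $k$ times, as in \Cref{fig:selfcrossing}. Since this edge attains its full budget of $k$ crossings it is saturated, and as its two endpoints are adjacent, $D_0$ is trivially a saturated $k$-plane drawing. Viewing the curve as a plane graph whose vertices are the two endpoints together with the $k$ self-crossings (each of degree four), and whose edges are the $2k+1$ sub-arcs, Euler's formula yields exactly $k$ bounded faces; each such face is bounded solely by arcs of the (saturated) edge and carries no vertex on its boundary, hence is a free, non-intersecting cell. Thus $D_0$ has $n_0=2$, $m_0=1$, and $f_0=k$, and \Cref{lem:stashing} produces arbitrarily large saturated $k$-plane drawings with $m=\frac{m_0}{n_0+f_0-1}(n-1)=\frac{n-1}{k+1}$ edges. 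The value $k=1$ is not covered by \Cref{lem:stashing}, so I would treat it directly by recursively stashing into the single free cell created by one self-crossing.

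For tightness I would show that every saturated $k$-plane drawing on $n$ vertices with $m$ edges satisfies $n\le(k+1)m+1$. I pass to the planarization $P$, the plane graph obtained by replacing every crossing with a degree-four vertex. Writing $X$ for the number of crossings, $X_{\mathrm{inter}}$ for the number of crossings between distinct edges, and $c_i$ for the number of crossings on edge $i$, the graph $P$ has $n+X$ vertices and $m+2X$ arcs, while $k$-planarity gives $\sum_i c_i = X + X_{\mathrm{inter}} \le km$. Applying Euler's formula to $P$ with $C$ connected components (counting isolated vertices as components) and $F$ faces yields the exact identity $n = 1 + C + m + X - F$. Since $n = 1+C+m+X-F \le 1+m+X+X_{\mathrm{inter}} = 1+m+\sum_i c_i \le (k+1)m+1$ as soon as $C \le F + X_{\mathrm{inter}}$, the whole statement reduces to this single inequality, which is exactly where saturation must enter: without it the inequality is false (e.g.\ one edge together with arbitrarily many isolated vertices).

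To establish $C \le F + X_{\mathrm{inter}}$ I would extract the local consequences of saturation: (i) no two non-adjacent vertices lie in the closure of a common cell, as they could be joined inside it by a crossing-free arc; in particular (ii) every isolated vertex sits alone in a free cell; and (iii) distinct components are mutually sealed off by saturated edges. Note that two edges of different components cannot cross, so no crossing is ever shared between components. I would then peel the components off one at a time along the containment structure of the planar embedding: removing a component $K$ that is nested in a single face and carries $f_K$ bounded faces changes $C-F-X_{\mathrm{inter}}$ by $1-f_K$, which is at most $0$ whenever $f_K\ge1$, with a single component as base case. The main obstacle is precisely the face-less components, above all the isolated vertices, for which this step would \emph{increase} the quantity; bounding their number needs (ii)—at most one per cell, and only in a free cell—together with the fact that the free cells are themselves produced by the self-crossings, so that these components consume exactly the face-slack generated elsewhere. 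Turning this local, per-cell accounting into a global injective charge is the crux of the proof.
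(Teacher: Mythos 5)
Your construction and its analysis via \Cref{lem:stashing} are correct and coincide with the paper's upper bound (the paper unrolls the stashing by hand instead of invoking the lemma, but the computation is identical), and your Euler-formula reduction of the lower bound to the single inequality $C \le F + X_{\mathrm{inter}}$ is arithmetically sound. The problem is that this inequality is precisely what you do not prove. Your peeling induction changes $C - F - X_{\mathrm{inter}}$ by $f_K + X_{\mathrm{inter}}(K) - 1$ when a component $K$ is removed, so it goes the wrong way for every component whose planarization is a crossing-free tree --- not only isolated vertices but also uncrossed plane edges and any crossing-free acyclic component --- and you concede that the ``global injective charge'' needed to absorb these components is the crux and is left unconstructed. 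As it stands, the lower bound is therefore not established.

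The gap closes with a simpler observation, which is how the paper argues: one shows $F \ge C$ outright, so the $X_{\mathrm{inter}}$ slack is never needed. Every connected component of the planarization contains at least one original vertex of $G$ (follow any edge segment to an endpoint of its edge; an isolated vertex is its own component). Two original vertices lying in different components of the planarization are in particular non-adjacent in $G$, so by saturation they cannot be incident to a common cell, as otherwise a crossing-free edge could be added inside that cell. Picking one original vertex per component, the non-empty sets of cells incident to these chosen vertices are pairwise disjoint, which injects components into faces and gives $C \le F$. Combined with your identity $n = 1 + C + m + X - F$ and $X \le km$ this yields $n \le (k+1)m + 1$ directly. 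The point your decomposition misses is that a ``face-less'' component should be charged to a face of the \emph{global} drawing that it merely touches, rather than to a bounded face that it itself creates.
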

	\begin{proof}
		Let $D_0$ be the initial drawing consisting of one edge with $k$ self-in\-ter\-sec\-tions defining $k$ non-intersecting free cells. 
		The drawing $D_1$ is defined by stashing $k-1$ isolated vertices into $k-1$ free cells of $D_0$; see \Cref{fig:selfcrossing}. 
		Note that $D_1$ has one free cell. 
		Let $t > 1$ be an integer and $D_t$ the drawing obtained 
		by stashing $D_1$ into the only free cell of $D_{t-1}$. 
		For all $t\ge 1$ we define $D'_t$ to be the drawing resulting from stashing one vertex into the only free cell of $D_t$.
		Let $n$ be the number of vertices in $D'_t$ with $t\ge 1$ 
		and $m$ the number of edges.
		We obtain that $m = t$ and $n = 2t + (k-1)t + 1 = (k+1)t + 1$.  
Rearranging the equation we get that $t = \frac{n - 1}{k+1}$ and consequently $m = \frac{n - 1}{k+1}$.

		Let $D=D(G)$ be a saturated $k$-plane drawing of a graph $G$ on $n$ vertices 
		and let $x$ be the number of crossings in $D$.
		Since $G$ might not be connected, we denote with $\gamma$ the number of connected components of $G$.
		Consider the planarization $\mathcal D = (P,C)$ obtained from $D$ 
		by replacing every crossing in $D$ with a vertex and 
		every edge segment with an edge.
		Observe that $\mathcal D$ has
		at most one self-loop per vertex and 
		at most one pair of parallel edges between each two vertices.
		In the following let $\gamma' \leq \gamma$.
		
		To prove the desired lower bound we make use of Euler's formula, accounting also for the connected components: $|P| - |C| + f = \gamma' + 1,$
		where $f$ includes the number of faces defined by self-loops and multiple edges in $\mathcal D$.
		Since the formula is usually stated for connected, simple planar graphs we 
		include the easy details on how this version can be derived in \Cref{apx:euler}.

		Next, we count the number of vertices in $P$ and edges in $C$.
		Since we added one vertex to $P$ for every vertex and crossing in $D$
		we get that $|P| = n + x$.
		To count the edges in $C$, traverse every edge in $D$ from one of its endpoints to the other one.
		For each crossing we see along this traversal
		there is one edge in~$C$ plus one additional edge for the last edge segment.  
Since every crossing, also a self-intersection,
		was seen twice during this traversal we obtain that $|C| = m + 2x$. 
		As a result we get from Euler's formula that $n - m - x + f = \gamma' + 1$.

		Using that there are at most $k\cdot m$ many crossings we get that $f + n \leq \gamma' + 1 + (k+1) m$.
Observe, that two non-adjacent vertices cannot share a cell in~$D$ without
        contradicting the assumption that $D$ is saturated.
Hence, we find $f \geq \gamma'$:
		Each connected component of $\mathcal D$ has at least one vertex and 
		no two vertices of different components can lie on the same face.
		Consequently, $\gamma' + n \leq f + n \leq \gamma' + 1 + (k+1) m$,  
which yields $m \geq \frac{n-1}{k+1}$ as desired.		
 	\end{proof}

	\section{Disallowing Self-Intersections}
	\label{sec:noselfcrossings}
In \Cref{sec:norestrictions} we saw that allowing self-intersections leads to very few edges being
	necessary to create saturated $k$-plane drawings.
	In this and all following sections we consider only $k$-plane drawings without self-intersections. 
The price we pay is that the number of edges we need is roughly doubled.

	For the best construction that we have without self-intersections, 
	the drawing that we use for stashing has two edges forming a spiral; see \Cref{fig:noself}. 
	We then can use one free cell to recursively stash the drawing 
	and the rest to stash vertices. 
	Intuitively, in our drawing we form almost one free cell for every crossing and
	every edge is crossed $k$ times.
	However, an essential difference with the previous section is that here 
each crossing counts for two edges.

	\begin{figure}[b]
		\centering
		\begin{minipage}[t]{.45\textwidth}
			\centering
			\includegraphics[page=1]{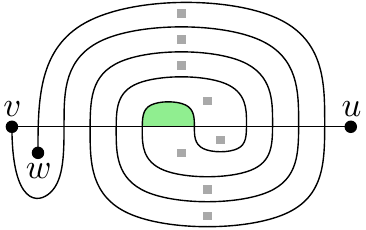}
			\subcaption{Construction for $k$ even}
			\label{fig:noselfeven}
		\end{minipage}
\begin{minipage}[t]{.45\textwidth}
			\centering
			\includegraphics[page=2]{no_self_crossings}
			\subcaption{Construction for $k$ odd}
			\label{fig:noselfodd}
		\end{minipage}
		\caption{Construction used in \Cref{thm:noselfcrossings}. 
			Gray squares represent isolated vertices that we stash and green cells are used for recursively stashing the construction.}
		\label{fig:noself}
	\end{figure}

	We split the proof into three lemmas.
	\Cref{lem:upkpnoselfcreven,lem:upkpnoselfcrodd} establish the upper bound on the edge-vertex ratio
	for the cases of $k$ even and odd, respectively.
	In \Cref{lem:lowkpnoselfcro} we prove the lower bound.
	\begin{lemma}
		\label{lem:upkpnoselfcreven}
		For even $k > 1$ there are arbitrarily large drawings on $n$ vertices
		with $m = \frac{2(n-1)}{k+1}$ edges, which are
		saturated, $k$-plane, and in which no edge self-intersect.
	\end{lemma}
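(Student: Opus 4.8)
The plan is to reduce the statement to \Cref{lem:stashing} by exhibiting a single saturated $k$-plane building block without self-intersections and with the right parameters. Concretely, I would construct a base drawing $D_0$ consisting of exactly two edges $e_1,e_2$ drawn as a spiral (as in \Cref{fig:noselfeven}) that cross each other exactly $k$ times; since these are the only two edges and neither self-intersects, every crossing is a crossing between $e_1$ and $e_2$, so both edges are crossed exactly $k$ times and are therefore saturated. Evenness of $k$ is used precisely to let the two spiral arms close up at the centre with both innermost endpoints lying in a common cell, which is what pins down the free-cell count below. Once $D_0$ is built with $n_0 = 4$ vertices (the four endpoints), $m_0 = 2$ edges, and $f_0 = k-2$ free, non-intersecting cells, \Cref{lem:stashing} immediately yields arbitrarily large saturated $k$-plane drawings with $m = \frac{m_0}{n_0 + f_0 - 1}(n-1) = \frac{2}{k+1}(n-1)$ edges, as claimed.

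To count the free cells I would argue exactly as in the proof of \Cref{thm:selfcrossings}, via Euler's formula on the planarization of $D_0$: with $4$ endpoints, $k$ crossing-vertices and $2(k+1)$ edge-segments, the connected plane graph has $k$ faces, hence $k-1$ bounded cells. All bounding segments belong to the two saturated edges, so a bounded cell fails to be free only if it carries an endpoint on its boundary. Placing the two outer endpoints on the unbounded cell and the two inner endpoints together on the innermost bounded cell leaves exactly $k-2$ bounded cells free, and these contain no self-intersection since the drawing has none. For the smallest even value $k=2$ this gives $f_0 = 0$ and the recursion does not start, so I would treat that case by a direct, non-recursive realization of the ratio $\frac{2}{3}$.

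The main obstacle is verifying that $D_0$ is genuinely saturated, i.e.\ that no further edge can be inserted. For any pair among the four endpoints that is not already joined, I must show that every curve connecting them---drawn without making any edge exceed $k$ crossings---is impossible; the delicate point is the two outer endpoints, which lie on the unbounded cell and must not be joinable by an arc routed around the spiral. This is where the precise geometry of \Cref{fig:noselfeven} (the winding direction and the relative position of the four endpoints) is essential, and I expect the bulk of the argument to consist in checking that each candidate connecting arc is forced to re-cross a saturated edge of the spiral. The analogous check then guarantees that after stashing, each isolated vertex sits in a cell bounded only by saturated edges and hence cannot be connected to anything, so saturation is preserved by the recursion and the hypotheses of \Cref{lem:stashing} are met.
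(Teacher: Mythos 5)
Your reduction to \Cref{lem:stashing} is the right strategy and your Euler-formula count of the cells is correct, but the base drawing you propose is not saturated, and this breaks the argument. With two \emph{independent} edges $e_1,e_2$, every endpoint of $e_1$ is non-adjacent to every endpoint of $e_2$. You place ``the two outer endpoints'' (one from each edge) together on the unbounded cell and ``the two inner endpoints'' (again one from each edge) together in the innermost bounded cell. Two points on the boundary of a common cell can always be joined by an arc whose interior stays inside that cell and hence crosses nothing, so the edge joining the two outer endpoints (and likewise the one joining the two inner endpoints) can be inserted without raising any crossing count above $k$. The ``delicate point'' you flag is therefore not delicate but fatal: no choice of winding direction can prevent an arc between two non-adjacent vertices that lie on the same cell. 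A secondary problem is that your parameters give $f_0=k-2$, which is $0$ for $k=2$, so \Cref{lem:stashing} (which requires $f_0>0$) does not apply to the smallest even case and you are left there without a construction.

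The paper avoids both problems by making the two edges \emph{adjacent}: a path on three vertices $u,v,w$ with $e=uv$ drawn straight and $f=vw$ spiralling around it with $k$ crossings. Then the only non-adjacent pair is $\{u,w\}$; evenness of $k$ lets one put $u$ and $v$ (adjacent) on the outer cell and $w$ alone in a cell incident to $v$ (adjacent), so saturation reduces to separating $u$ from $w$, which the $k$ crossings between $e$ and $f$ enforce. With $n_0=3$, $m_0=2$ and $f_0=k-1\ge 1$, \Cref{lem:stashing} gives $\frac{2(n-1)}{3+(k-1)-1}=\frac{2(n-1)}{k+1}$, including the case $k=2$. If you insist on independent edges, you must instead put each edge's \emph{own} two endpoints into a common cell (one pair on the outer cell, the other pair in a bounded cell), as the paper does for $k=4$ in \Cref{lem:4s4p}; that also yields $f_0=k-2$ and the ratio $\frac{2}{k+1}$, but only for $k\ge 4$, and you would still owe an existence argument for all even $k$ plus a separate treatment of $k=2$.
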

	\begin{proof}
		We construct a drawing $D_0$ as follows. 
		Take a path on three vertices and let $e = uv, f = vw$ be the two edges and $u,v,w$ the three vertices.
		Draw $e$ as a straight line in the plane.
		Then we draw $f$ as a spiral, intersecting $e$ exactly $k$ times as shown in \Cref{fig:noselfeven}.
		Crucially, since $k$ is even we can cross such that  
		$u$ and $v$ are incident to the outer cell, 
		while $w$ is placed in the other cell which is incident to $v$.
		As $e$ and $f$ share $k$ crossings we cannot add the edge between $u$ and $w$ 
		without crossing $e$ or $f$ the $k+1$-st time.
		Hence, the drawing is saturated, $k$-plane, and has no edge self-intersect.
		Furthermore, there are $k-1$ free cells.
		Using \Cref{lem:stashing} we get that for any number of vertices $n \geq 3$
		the number of edges $m$ is $\frac{2(n-1)}{3 + k - 1 - 1} = \frac{2(n-1)}{k + 1}$.
	\end{proof}
	
	\begin{restatable}{lemma}{lemupkpnoselfcrodd}
		\label{lem:upkpnoselfcrodd}
		For odd $k > 1$ there are arbitrarily large drawings on $n$ vertices
		with $m = \frac{2(n-1)}{k}$ edges, which are
		saturated, $k$-plane, and in which no edge self-intersect.
	\end{restatable}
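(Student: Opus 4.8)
The plan is to mirror the proof of Lemma~\ref{lem:upkpnoselfcreven}: I would build a small saturated gadget $D_0$ on three vertices and then invoke Lemma~\ref{lem:stashing}. Take vertices $u,v,w$ with edges $e=uv$ and $f=vw$, draw $e$ as a straight segment, and route $f$ as a simple spiral crossing $e$ in exactly $k$ interior points (see Figure~\ref{fig:noselfodd}). Since neither edge self-crosses and each of $e,f$ is crossed exactly $k$ times, $D_0$ is $k$-plane without self-intersections and both edges are saturated.

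For saturation I would observe that the only non-edge on $\{u,v,w\}$ is $uw$, and that the spiral encloses $w$ so that $u$ and $w$ lie in faces separated by saturated segments of $e$ and $f$. Any arc realising $uw$ would then have to cross $e$ or $f$ a $(k+1)$-st time, so no edge can be added and $D_0$ is saturated.

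The crux is the free-cell count. By Euler's formula the planarization has $3+k$ vertices and $2k+2$ edge segments, hence exactly $k$ bounded faces for every $k$. The endpoint $u$ has degree one and lies only on the outer face, while $v$ lies on the outer face and on a single bounded cell. In the even case the spiral returns $w$ into that same bounded cell incident to $v$, so it is the unique non-free bounded face and $k-1$ cells are free. For odd $k$ the parity of the $k$ crossings instead forces the final arc, and hence $w$, into an inner pocket distinct from the bounded cell incident to $v$; now both that pocket (holding $w$) and the cell incident to $v$ are non-free, leaving only $k-2$ free cells. Pinning down this parity-driven loss of exactly one free cell from the spiral's topology is the main obstacle.

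Finally I would apply Lemma~\ref{lem:stashing} to $D_0$ with $n_0=3$, $m_0=2$, and $f_0=k-2$. Since $k>1$ and $k$ is odd we have $k\ge 3$, so $f_0=k-2\ge 1$ and the hypothesis $f_0>0$ holds. The lemma then yields arbitrarily large saturated $k$-plane drawings with $m=\frac{2}{3+(k-2)-1}(n-1)=\frac{2(n-1)}{k}$ edges, as required.
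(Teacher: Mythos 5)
Your proposal matches the paper's proof: the same three-vertex path gadget with $f$ spiralling around $e$, the same parity argument showing that for odd $k$ one endpoint must be enclosed in a pocket so that only $k-2$ of the $k$ bounded cells are free, and the same application of Lemma~\ref{lem:stashing} with $n_0=3$, $m_0=2$, $f_0=k-2$. Your explicit Euler-formula count of the $k$ bounded faces is a small addition the paper leaves implicit, but the argument is otherwise identical and correct.
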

	\begin{proof}
		We construct a drawing $D_0$ nearly as in the proof of \Cref{lem:upkpnoselfcreven}.
		An illustration is shown in \Cref{fig:noselfodd}.
		The only difference is that, since $k$ is odd, the $k-1$-th crossing of $e$ and $f$
		is such that $f$ is on the lower side of the supporting line through $e$,
		while the last crossing of $f$ with $e$ is from the upper side.
		Consequently, $f$ has to be drawn such that it creates a cell in which we either enclose $u$ or $w$.
		As a result there are only $k-2$ free cells and 
		for $n$ vertices we obtain $\frac{2(n-1)}{3 + k - 2 - 1}(n - 1) = \frac{2(n-1)}{k}$ as the number of edges.
	\end{proof}
	
	\begin{restatable}{lemma}{lemlowkpnoselfcro}
		\label{lem:lowkpnoselfcro}
		Any saturated $k$-plane drawing on $n$ vertices in which no edge self-intersects 
		has at least $\left\lfloor\frac{2n-1}{k+2}\right\rfloor$ edges.	
	\end{restatable}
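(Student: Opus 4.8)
The plan is to mirror the lower-bound argument in the proof of \Cref{thm:selfcrossings}, reusing its planarization and Euler-formula machinery, and to exploit the single structural change caused by forbidding self-intersections: every crossing now involves two \emph{distinct} edges.

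First I would fix a saturated $k$-plane drawing $D = D(G)$ of a graph $G$ on $n$ vertices with $m$ edges and $x$ crossings, and form the planarization $\mathcal D = (P,C)$ exactly as before, replacing each crossing by a vertex and each edge segment by an edge. As in \Cref{thm:selfcrossings} this gives $|P| = n + x$ and $|C| = m + 2x$, and the component-aware Euler formula yields $n - m - x + f = \gamma' + 1$, equivalently $m + x = n + f - \gamma' - 1$, where $f$ and $\gamma'$ are as in that proof ($\gamma'$ bounded by the number of connected components). The saturation argument from \Cref{thm:selfcrossings} then applies verbatim: two non-adjacent vertices --- in particular two vertices of different components --- cannot lie on a common face of $\mathcal D$ without admitting an uncrossed insertable edge, so $f \ge \gamma'$ and hence $m + x \ge n - 1$.

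The key new step, and the whole source of the improved denominator, is the crossing bound. Since no edge self-intersects, every crossing is a transversal crossing of two distinct edges and therefore contributes one unit to the crossing count of each of the two incident edges. Summing the at-most-$k$ crossings over all $m$ edges double counts each crossing, so $2x \le km$, i.e. $x \le km/2$; this replaces the weaker $x \le km$ available in \Cref{thm:selfcrossings}. Substituting into $m + x \ge n - 1$ gives $n - 1 \le m + x \le m + km/2$, whence $m \ge \frac{2(n-1)}{k+2}$.

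Finally I would upgrade this to the stated floor form using integrality of $m$: from $m \ge \frac{2(n-1)}{k+2}$ we get $m \ge \lceil \frac{2n-2}{k+2}\rceil$, and since $k+2 \ge 2$ the elementary identity $\lceil \frac{a}{d}\rceil = \lfloor \frac{a+d-1}{d}\rfloor$ gives $\lceil \frac{2n-2}{k+2}\rceil = \lfloor \frac{2n-1+k}{k+2}\rfloor \ge \lfloor \frac{2n-1}{k+2}\rfloor$, as required. I expect the genuinely new content to be isolated to the bound $2x \le km$; the main thing to be careful about is that the drawing need not be simple, so adjacent edges and repeated crossings between the same pair of edges must still be counted correctly (each remains a crossing of two distinct edges, so the bound is unaffected), together with the bookkeeping of isolated vertices and components in the Euler/saturation step.
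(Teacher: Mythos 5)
Your proposal is correct and follows essentially the same route as the paper's proof: the same planarization with $|P|=n+x$ and $|C|=m+2x$, the same component-aware Euler formula and saturation argument giving $f\ge\gamma'$, and the same key observation that forbidding self-intersections improves the crossing bound to $x\le km/2$, yielding $m\ge\frac{2(n-1)}{k+2}$. Your final integrality step deriving the floor form $\left\lfloor\frac{2n-1}{k+2}\right\rfloor$ is in fact spelled out more carefully than in the paper, which states the conclusion as $m\ge\frac{2n-1}{k+2}$ directly.
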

	\begin{proof}
		Given some saturated $k$-plane drawing $D(G)$ of a graph $G$ with $n$ vertices, $m$ edges, and $x$ crossings
		in which no edge self-intersects,
		let $\mathcal D = (P,C)$ be the planarization of $D$.
		Furthermore, let $\gamma$ be the number of connected components in $G$ and 
		$\gamma'$ the number of connected components in $\mathcal D$.
		Observe that $\gamma' \leq \gamma$.
		
		In the following we proceed as in the the proof of the lower bound in \Cref{thm:selfcrossings}. 
		Note that since in $D$ no edge self-intersects, 
		we find that $\mathcal D$ has not self-loops. 
		Nonetheless, arguing as before we again obtain that
		$|P| = n +x$ and $|C| = m + 2x$.
		Using Euler's formula as derived in \Cref{apx:euler}
		we obtain that 
		\begin{eqnarray*}
			n - m - x + f &=& \gamma' + 1\\
			f + n &=& \gamma' + 1 + m + x.
		\end{eqnarray*} 
		Since no edge has self-intersections in $D$ the number of crossings is upper bounded by $\frac{km}{2}$.
		Consequently we get that 
		\begin{eqnarray*}
			f + n &\leq& \gamma' + 1 + \frac{k + 2}{2}m.
		\end{eqnarray*} 
		Again we can argue that no two non-adjacent vertices share a cell and hence $f \geq \gamma'$ holds.
		Finally, we obtain 
		\begin{eqnarray*}
			\gamma' + n \leq f + n &\leq& \gamma' + 1 + \frac{k + 2}{2}m
		\end{eqnarray*}
		which yields $m \geq \frac{2n-1}{k + 2}$.
	\end{proof}
	
	\Cref{lem:upkpnoselfcreven,lem:upkpnoselfcrodd,lem:lowkpnoselfcro} proof \Cref{thm:noselfcrossings}.
	
	\begin{restatable}{theorem}{thmnoselfcrossingsub}
		\label{thm:noselfcrossings}
		For every $k > 1$ there are arbitrarily large drawings on $n$ vertices with 
		$\frac{n-1}{\left\lfloor\frac{k}{2}\right\rfloor + \frac{1}{2}}$ edges,
		which are saturated $k$-plane and in which no edge self-intersects.
		Moreover, any saturated $k$-plane drawing in which no edge self-in\-ter\-sec\-ts
		has at least $\left\lfloor\frac{2n-1}{k+2}\right\rfloor$ edges.
\end{restatable}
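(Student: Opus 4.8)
The plan is to prove the two halves of the statement separately, relying on the three lemmas that immediately precede it. The upper bound splits according to the parity of $k$: \Cref{lem:upkpnoselfcreven} handles even $k$ and \Cref{lem:upkpnoselfcrodd} handles odd $k$. To see that both yield the claimed unified ratio $\frac{n-1}{\lfloor k/2\rfloor + \frac12}$, I would simply substitute. For even $k$ one has $\lfloor k/2\rfloor + \frac12 = \frac{k+1}{2}$, so the bound becomes $\frac{2(n-1)}{k+1}$, matching \Cref{lem:upkpnoselfcreven}; for odd $k$ one has $\lfloor k/2\rfloor + \frac12 = \frac{k}{2}$, so the bound becomes $\frac{2(n-1)}{k}$, matching \Cref{lem:upkpnoselfcrodd}. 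Thus the upper half of the theorem is immediate once the two construction lemmas are in hand.

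For the constructions themselves, the idea I would pursue is the spiral of \Cref{fig:noself}: start from the path $u$--$v$--$w$ with $e=uv$ drawn straight and $f=vw$ spiralling across $e$ exactly $k$ times, so that both $e$ and $f$ are saturated and the missing edge $uw$ cannot be inserted without forcing a $(k+1)$-st crossing on one of them. This initial drawing is then made arbitrarily large via \Cref{lem:stashing}, fed with $n_0=3$, $m_0=2$, and the number $f_0$ of free cells the spiral produces; the even case yields $f_0=k-1$ free cells and the odd case only $f_0=k-2$, which is precisely the source of the two different ratios.

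For the lower bound I would reuse the planarization argument from \Cref{thm:selfcrossings}, exactly as carried out in \Cref{lem:lowkpnoselfcro}. Passing to the planarization $\mathcal D=(P,C)$ gives $|P|=n+x$ and $|C|=m+2x$, and Euler's formula with connected components yields $f+n=\gamma'+1+m+x$. The one essential change from the self-intersecting setting is the crossing bound: since no edge self-intersects, every crossing lies on two distinct edges, each with a budget of $k$, so $2x\le km$ and hence $x\le \frac{km}{2}$ rather than $x\le km$. Combining this with the saturation fact $f\ge\gamma'$ (no two non-adjacent vertices can share a cell in a saturated drawing) gives $\gamma'+n\le\gamma'+1+\frac{k+2}{2}m$, i.e.\ $m\ge\frac{2n-1}{k+2}$, which, since $m$ is an integer, delivers the stated floor bound $\lfloor\frac{2n-1}{k+2}\rfloor$.

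The step I expect to be the real obstacle is the parity handling inside the constructions, not the arithmetic of combining the pieces. The delicate point is arranging the spiral so that the endpoints $u,v,w$ land in exactly the cells that block the insertion of $uw$ while leaving as many free cells as possible. The parity of $k$ controls from which side of the supporting line of $e$ the final crossing of $f$ approaches, and this is exactly what forces the odd case to enclose one endpoint in a cell that would otherwise be free, sacrificing one free cell and therefore incurring the weaker ratio $\frac{2(n-1)}{k}$. Verifying rigorously that no further edge can be added (and that the loss in the odd case is genuinely unavoidable within this family) is where the care is needed.
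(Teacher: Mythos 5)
Your proposal is correct and follows essentially the same route as the paper: the theorem is obtained by combining the two parity-dependent spiral constructions (Lemmas~\ref{lem:upkpnoselfcreven} and~\ref{lem:upkpnoselfcrodd}, fed into Lemma~\ref{lem:stashing} with $n_0=3$, $m_0=2$, and $f_0=k-1$ or $k-2$) with the planarization/Euler lower bound of Lemma~\ref{lem:lowkpnoselfcro}, and your arithmetic verifying that $\lfloor k/2\rfloor+\tfrac12$ equals $\tfrac{k+1}{2}$ for even $k$ and $\tfrac{k}{2}$ for odd $k$ is exactly the glue the paper leaves implicit. Your identification of the parity of the last crossing of the spiral as the delicate point matches the paper's treatment as well.
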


	\aclearpage
	
	\section{Restricting the Simplicity}
	\label{sec:restricting}
	
	The construction in \Cref{sec:noselfcrossings}  
	of families of saturated $k$-plane drawings 
	requires two adjacent edges to cross $k$ times. 
	Thus, the resulting drawings are $k+1$-simple but not $k$-simple. 
	In fact, any $k$-plane drawing is $k+1$-simple. 
	However, the best studied $k$-plane drawings are the simple ones. 
	By slightly increasing the edge-vertex ratio, 
	in this section 
	we present families of saturated $k$-plane drawings 
	with few edges
	that are $2$-simple and simple.

	\subsection{$2$-Simple Drawings}
	\label{sec:2simple}

Our construction of families of saturated $2$-simple $1$-plane drawings 
is illustrated in \Cref{fig:1pnoself}. 
As in the previous sections, it is based on recursive stashing. 
However, in this case it is done by repeatedly making a copy of the drawing in \Cref{fig:1pnoself} and identifying the bottommost edge with the green edge of the previous copy. 

For $2$- and $3$-planarity our $2$-simple construction is illustrated in \Cref{fig:circlelowsimple}. 
For the $2$-plane construction we use the orange solid sub-arcs 
while for $3$-plane construction we use the orange dotted sub-arcs. 
The final drawings are obtained by stashing as in the previous sections. 
To construct arbitrarily large saturated $2$-simple $k$-plane drawings 
we start from the drawing in \Cref{fig:circlelowsimple} 
with the orange solid sub-arcs.
We then insert two sets of $\frac{k-2}{2}$ independent edges 
crossing the two free cells of the drawing as in \Cref{fig:gadgetlowsimple}.

	We first prove the case of saturated $2$-simple $1$-plane drawings in \Cref{lem:2s1p} and
	then for saturated $2$-simple $2$- and $3$-plane drawings in \Cref{lem:2s2p3p}.
	Then we describe our construction for saturated $2$-simple $k$-plane drawings with $k >3$ 
	and prove in \Cref{lem:rz} that the drawing has these properties as well as determine the number
	of vertices, edges, and free cells.
	
	\begin{restatable}{lemma}{lemtwosonep}
		\label{lem:2s1p}
		There are arbitrarily large drawings on $n$ vertices with $\frac{3(n-2)}{2}$ edges, 
		which are saturated, $2$-simple, and $1$-plane
	\end{restatable}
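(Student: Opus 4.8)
The plan is to exhibit a single base gadget---the drawing in \Cref{fig:1pnoself}---that is $2$-simple, $1$-plane, and already blocks all internal edge insertions, and then to build arbitrarily large examples by chaining copies of it. Unlike the stashing of \Cref{lem:stashing}, where disjoint copies are nested inside free cells, here consecutive copies overlap in a single edge: I would glue copy $i{+}1$ to copy $i$ by identifying the bottommost edge of the new copy with the distinguished green edge of the previous one. First I would fix the combinatorics of the gadget, reading off from the figure how many private vertices and private edges each new copy contributes once the shared edge (together with its two endpoints) is accounted for.

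For the counting, let $D_t$ denote the drawing obtained from $t$ chained copies, with one extra vertex stashed at the end as in the proof of \Cref{thm:selfcrossings}. Each additional copy increases the vertex count and the edge count by fixed amounts, the shared edge being counted only once, so that the edge-to-vertex slope equals $\tfrac32$; absorbing the boundary terms coming from the first and last copy then yields $m=\tfrac{3(n-2)}{2}$ exactly. The bookkeeping follows the same pattern as before: express $n$ and $m$ as affine functions of $t$, eliminate $t$, and read off the closed form.

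Next I would verify the two structural constraints. For $1$-planarity it suffices to confirm that inside the gadget every edge is crossed at most once and, critically, that the identification along the shared edge creates no second crossing on any edge of either copy; hence the seam must be chosen so that the shared edge carries no crossing in the gluing neighborhood. For $2$-simplicity I would check that every pair of edges, within a copy and across the seam, shares at most two points in total (endpoints plus crossings); the only delicate pairs are those incident to the shared edge, and the identification must be arranged so that two edges never both cross and share an endpoint beyond this budget.

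The main obstacle, and the heart of the argument, is saturation: no further edge may be inserted while keeping the drawing $2$-simple and $1$-plane. The strategy is a face-by-face analysis of the arrangement, in the spirit of the lower-bound arguments of the earlier sections. For each pair of non-adjacent vertices I would argue that every curve joining them is forced either to recross an edge that is already crossed once---destroying $1$-planarity---or to meet some edge in a third common point, destroying $2$-simplicity. Concretely, I would classify the vertices by the cell they bound and show that within each cell all mutually reachable vertices are already adjacent, so that leaving a cell always incurs a forbidden crossing. The subtle part is that saturation must survive the gluing: two vertices lying in different copies but near the seam could \emph{a priori} be connected through the neighborhood of the shared edge, so I expect the bulk of the work to be ruling out precisely these cross-seam candidate edges and confirming that the shared edge together with its incident crossings blocks every such connection.
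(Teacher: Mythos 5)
Your plan follows the paper's proof essentially step for step: the same gadget from \Cref{fig:1pnoself}, the same chaining by identifying the bottom edge of each new copy with the distinguished edge of the previous one, the same affine count in $t$ giving slope $\tfrac32$, and the same saturation argument via cells bounded by already-crossed edges (the paper's lens between the crossing $q$ and the shared vertex $w$ is exactly the separator you describe, and nesting each new copy inside that lens is what disposes of your cross-seam worry). The only cosmetic difference is the boundary bookkeeping: the paper attributes the final two vertices to the last unshared edge rather than stashing an extra vertex, but both yield $n=4t+2$, $m=6t$.
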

	\begin{proof}
		Consider the drawing in \Cref{fig:1pnoself}.
		The depicted graph consists of two triangles that share a vertex.
		Let $u,v,w,x,y$ be the five vertices and $u,v,w$ and $w,x,y$ the two triangles.
		The depicted drawing is obtained by introducing a crossing between 
		$uv$ and $vw$ and $wx$ and $wy$.
		The edges $uv$ and $xy$ are drawn plane and
		no further crossings are allowed in this drawing.
		Let $p$ be the crossing point between $uw$ and $vw$ and 
		$q$ the crossing point between $wx$ and $wy$.
		We draw the edge $xy$ into the cell completely bounded by the two edge segments between $q$ and $w$.
		The edge $uv$ is drawn into the outer cell.
		Let $D_0$ be this drawing.		
		Clearly, $D_0$ is $2$-simple and $1$-plane as there are only two independent crossings.
		Furthermore, every pair of non-adjacent vertices is separated from each other by the 
		edge segments between $q$ and $w$.
		Hence, $D_0$ is also saturated.
		
		Next, observe that there is one cell completely bounded by the edge segments between $p$ and $w$
		that is empty and has only one vertex, namely $w$, incident to its boundary.
		We draw a vertex into that cell and connect it with an edge to $w$ that is drawn inside of the cell;
		see the gray vertex in \Cref{fig:1pnoself}.
		Let the resulting drawing be $D_1$.
		
		Similarly to when we stash an isolated vertex into a free cell,
		we now recursively add copies of $D_1$.
		To obtain $D_2$ consider two copies of $D_1$, $D_1$ and $D'_1$.
		We identify the edge $xy$ in $D_1$ with the edge $uv$ in $D'_1$ and
		draw the remainder of $D'_1$ completely inside the cell bounded by the edge segments
		between $u$, $v$, and $w$.
		Finally, rename the vertices such that the vertices $u$, $v$, $w$, $x$, and $y$ in $D_2$ 
		correspond to the ones of $D'_1$ with the same name.
		Let $t > 1$ be an integer and $D_t$ the drawing obtained by 
		identifying the edge $uv$ in a copy of $D_1$ with the edge $xy$ in $D_{t-1}$ and
		drawing the remainder of $D_1$ completely inside the cell bounded by the edge segments
		between $x$, $y$, and $w$ in $D_{t-1}$.
		
		It remains to compute the number of edges in $D_t$.
		For each added copy of $D_1$ we find four vertices in $D_t$.
		Adding the two additional vertices of the last edge $xy$ in $D_t$
		which we do not identify with a copy of $D_1$ we get 
		that $D_t$ has $n = 4t + 2$ vertices in total.
		Similarly, $D_t$ has six edges for every added copy of $D_1$, 
		hence $m = 6t$ edges in total. 
		Hence, with $t = \frac{n - 2}{4}$ we get that 
		$m = 6\frac{(n - 2)}{4} = \frac{3(n - 2)}{2}$.
	\end{proof}
	
	\begin{figure}[tb]
		\centering
		\begin{minipage}[t]{.31\textwidth}
			\centering
			\includegraphics[page=1]{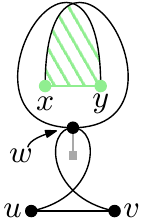}
			\subcaption{Construction for $k=1$}
			\label{fig:1pnoself}
		\end{minipage}		
		\begin{minipage}[t]{.3\textwidth}
			\centering
			\includegraphics[page=2]{2_simple}
			\subcaption{Construction for $k=2$ and $k = 3$ (dotted)}
			\label{fig:circlelowsimple}
		\end{minipage}
		\begin{minipage}[t]{.34\textwidth}
			\centering
			\includegraphics[page=1]{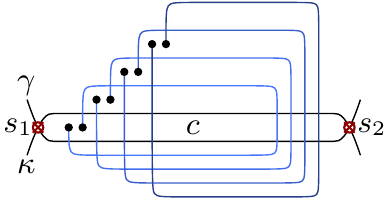}
			\subcaption{Edges added to the free cells of (b) for $k > 3$}
			\label{fig:gadgetlowsimple}
		\end{minipage}
		\caption{The drawings used in \Cref{thm:lowsimplicity}.
			Gray squares represent vertices that we stash and green cells are used for recursively stashing the construction.}
		\label{fig:lowsimple}
	\end{figure}	
	
	\begin{lemma}
		\label{lem:2s2p3p}
		For $k = 2,3$ there are arbitrarily large drawings on $n$ vertices with $\frac{4(n-1)}{5}$ edges, which are
		saturated, $2$-simple, and $k$-plane .
	\end{lemma}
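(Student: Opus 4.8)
The plan is to follow the same template as Lemmas~\ref{lem:upkpnoselfcreven} and~\ref{lem:upkpnoselfcrodd}: exhibit a single base drawing $D_0$ with a small, fixed number of vertices, edges, and free cells, verify that it is saturated, $2$-simple, $k$-plane and free of self-intersections, and then obtain the infinite family by invoking Lemma~\ref{lem:stashing}. Concretely, I would take $D_0$ to be the drawing of \Cref{fig:circlelowsimple}, reading the orange \emph{solid} sub-arcs for $k=2$ and the orange \emph{dotted} sub-arcs for $k=3$. The key quantitative claim I must establish is that in both readings $D_0$ has $m_0 = 4$ edges, $n_0 = 4$ vertices and exactly $f_0 = 2$ free cells; substituting into Lemma~\ref{lem:stashing} then yields $m = \frac{m_0}{n_0 + f_0 - 1}(n-1) = \frac{4}{5}(n-1)$, as required. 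The reason both cases collapse to the same ratio is that passing from the solid to the dotted routing changes only the number of crossings along each edge (from two to three), leaving $n_0$, $m_0$ and the two free cells untouched.

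First I would check $2$-simplicity and $k$-planarity, which is a finite, local check. For $2$-simplicity I verify that no edge crosses itself and that every pair of edges meets in at most two points, counting a shared endpoint. In the solid reading the four edges form a cyclic arrangement in which each edge is crossed exactly twice, so the drawing is $2$-plane; in the dotted reading each orange arc is rerouted so that it picks up exactly one extra crossing, raising every edge to three crossings and making the drawing $3$-plane, while no pair of edges ever reaches a third common point. Since in either reading each edge is crossed exactly $k$ times, every edge is saturated, which is what I will need for the separation argument below.

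The main obstacle, and the step deserving the most care, is to argue that $D_0$ is \emph{saturated}, i.e.\ that no further edge can be inserted while keeping the drawing $2$-simple and $k$-plane. Here I would rely on a separation argument analogous to the one used in \Cref{lem:2s1p}: every pair of non-adjacent vertices is separated by a closed curve composed of segments of already-saturated edges, so any arc joining them would have to cross a saturated edge for the $(k+1)$-st time. The subtlety specific to the $2$-simple setting is that a candidate new edge is itself allowed to cross a given edge twice, so exhibiting a single separating arc does not suffice; I must verify that the separating cycle consists of segments of edges that are \emph{already} crossed $k$ times, so that even one additional crossing is forbidden. This verification has to be carried out for both the solid ($k=2$) and the dotted ($k=3$) routings, and in the dotted case one must re-examine the free cells to confirm that the two extra crossings neither open up new cells nor destroy the separation. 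Once saturation and the counts $n_0 = m_0 = 4$, $f_0 = 2$ are confirmed, the family and its edge count $\frac{4(n-1)}{5}$ follow immediately from Lemma~\ref{lem:stashing}.
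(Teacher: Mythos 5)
Your proposal follows essentially the same route as the paper: the same base drawing from \Cref{fig:circlelowsimple} (solid arcs for $k=2$, dotted for $k=3$), the same counts $n_0=m_0=4$, $f_0=2$, the same saturation argument (every edge already has $k$ crossings and the non-adjacent pairs $\{u,w\}$ and $\{v,x\}$ lie in different cells), and the same final application of \Cref{lem:stashing} to get $\frac{4(n-1)}{5}$. Your extra remark about why a single separating cycle of saturated edges suffices even though a new edge may cross an old one twice in a $2$-simple drawing is a correct and slightly more careful articulation of the step the paper states tersely.
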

	\begin{proof}
		First we show the lemma for $k = 2$
		Consider the drawing shown in \Cref{fig:circlelowsimple} ignoring the dotted variation.
		It consists of a cycle on four vertices.
		Let $u,v,w,x$ be those vertices and $uv$, $vw$, $wx$, and $ux$ the edges.
		We draw the edges such that 
		$uv$ and $wx$ and $vw$ and $ux$ each cross twice.
		Let $D_0$ be the resulting drawing 
		Clearly $D_0$ is $2$-plane and $2$-simple.
		Consequently non of the four edges can be crossed.
		Observe that the non-adjacent pairs of vertices, namely $u$ and $w$ and $v$ and $x$, 
		are incident to different cells.
		Hence, the drawing is saturated.
		Finally, we observe that there are two free cells
		formed by the crossings between $uv$ and $wx$ and $vw$ and $ux$.
		Consequently, the number of vertices in $D_0$ is $n_0 = 4$, the number of edges $m_0 = 4$, 
		and the number of free cells $f_0 = 2$
		Using \Cref{lem:stashing} we obtain that there are arbitrarily large
		saturated $2$-simple $2$-plane drawings on $n$ vertices with 
		$m = \frac{4n}{4 + 2 - 1} = \frac{4n}{5}$ edges.
		
		For the case of $k=3$ we modify the previous drawing $D_0$.
		The modification is illustrated in \Cref{fig:circlelowsimple} with the dotted variation.
		It consists of adding a crossing between the edges $uv$ and $vw$ and $wx$ and $ux$
		close to $v$ and $x$, respectively.
		Let $D'_0$ be the resulting drawing.
		Clearly the drawing is $3$-plane.
		It is also $2$-simple as no edge shares more than $2$ points with any other edge.
		Furthermore, the non-adjacent vertices $u$ and $w$ and $v$ and $x$ are still incident to different cells.
		Hence, $D'_0$ is a saturated $2$-simple $3$-plane drawing.
		Again, observe that therer are two free cells formed by $uv$ and $wx$ and $vw$ and $ux$ which yields that 
		$D'_0$ has the same number of vertices, edges, and free cells as $D_0$.
		Consequently, we obtain the same bound on the number of edges in arbitrarily large 
		saturated $2$-simple $3$-plane drawings
		as for the case of saturated $2$-simple $2$-plane drawings.
	\end{proof}
	
	To proof our bounds for $2$-simple $k$-plane drawings we require an additional construction.
It is a drawing consisting of $z$ independent edges that we draw in a certain way,
	such that it can be inserted into the drawing used in \Cref{lem:2s2p3p}.
	The construction will result in a drawing with $2z + 2$ crossings,
	hence the final drawing always has even planarity.
	
	Let $G_z$ be a graph consisting of $z>0$ independent edges.
	In the following we describe how to construct the drawing $M_z$ from $G_z$;
	\Cref{fig:gadgetlowsimple} depicts $M_4$.	
	Let $\gamma$ and $\kappa$ be two curves crossing each other in points $s_1$ and $s_2$,
	forming an empty cell as shown in \Cref{fig:gadgetlowsimple}.
	We denote that cell with $c$.
	Furthermore, we extend $\gamma$ and $\kappa$ to infinity,
	such that each splits the plane into two areas.
	We denote the area above $\kappa$ with $c_\kappa$ and 
	the area below $\gamma$ with $c_\gamma$.
	
	To construct the drawing $M_z$ of $G_z$ we begin by drawing $z$ pseudocircles $\omega_i$, $1 \leq i \leq z$.
	Each $\omega_i$ is drawn such that it intersects $\gamma$ and $\kappa$ 
	twice in between $s_1$ and $s_2$.
	Furthermore, we require the crossings of the $\omega_i$s to appear on $\gamma$ and $\kappa$
	as we traverse them from $s_1$ to $s_2$ as follows:
	the first crossing is with $\omega_1$, the second with $\omega_2$ and so on until $\omega_z$,
	then the $z +1$-st crossing is again with $\omega_1$, the $z + 2$-nd one with $\omega_2$, and
	the $2z$-th crossing is with $\omega_z$.
Since this order is the same along $\gamma$ and $\kappa$
	we get that all crossings between the $\omega_i$s lie outside $c$.
	Furthermore, all pseudocircles pairwise intersect, once in $c_\gamma$ and once in $c_\kappa$.
	Also, by the imposed ordering there exists a point on each $\omega_i$ that lies 
	inside the cell $c$ and not in the interior of or on any $\omega_j$ with $j > i$.
	Starting from such a point in $c$ on $\omega_i$ and 
	traversing $\omega_i$ such that $\kappa$ is crossed first we require the 
	crossings with the other $\omega_i$ to appear in the same order of the indices.
	Consequently, for every $\omega_i$ with $2 < i < z$ there exists a cell lying inside $c_\gamma$ that is bounded by
	$\omega_{i-2}$, $\omega_{i-1}$, and $\omega_{i+1}$ let $c_i$ be that cell for $\omega_i$.
	For $\omega_1$, $\omega_2$, and $\omega_3$ we choose the cells bounded by 
	$\gamma$, $\kappa$, and $\omega_{i+1}$; $\kappa$, $\omega_{i-1}$, and $\omega_{i+1}$; and
	$\omega_{i-2}$, $\omega_{i-1}$, and $\kappa$, respectively.
	For each $\omega_i$ cut the pseudocircle inside $c_i$ and replace the endpoints with the vertices $u_i$ and $v_i$ in $G_z$.
	Draw the edge $u_iv_i$ by following the non-plane segment of $\omega_i$ 
	with endpoints at the positions of $u_i$ and $v_i$.
	Let $M_z$ be the such obtained drawing of edges $u_iv_i$ from $G_z$.
	
	To construct the drawing used in \Cref{thm:lowsimplicity}
	we combine the drawings from \Cref{lem:2s2p3p} with two copies $M^1_z$ and $M^2_z$ 
	of the drawing $M_z$ for $z > 0$ as follows.
	Let $u$, $v$, $w$, and $x$ be the four vertices of the four cycle and $uv$, $vw$, $wx$, and $ux$ the edges.
	Then, $uv$ crosses with $wx$ and $vw$ crosses with $ux$.
	Identify $uv$ with $\gamma$ and $wx$ with $\kappa$ in $M^1_z$ and
	$vw$ with $\gamma$ and $ux$ with $\kappa$ in $M^2_z$.
	Furthermore, we can draw the curves in each of $M^1_z$ and $M^2_z$ such that 
	no edge in $M^1_z$ intersects $vw$ or $ux$ and
	no edge  in $M^2_z$ intersects $uv$ or $wx$.
	Let $R_z$ be the resulting drawing.
	
	\begin{lemma}
		\label{lem:rz}
		For $z > 0$ the drawing $R_z$ is saturated, $2$-simple, and $2z + 2$-plane and $R_z$ has
		$4z+4$ vertices, $2z + 4$ edges, and $2(z^2+z+1)$ free cells.
	\end{lemma}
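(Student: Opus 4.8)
The plan is to verify the four claimed properties in turn, deriving almost everything from a single observation: in $R_z$ \emph{every} edge is crossed exactly $2z+2$ times. First I would record the combinatorial quantities directly from the construction. The four-cycle contributes the vertices $u,v,w,x$ and the edges $uv,vw,wx,ux$, while each of the two inserted copies $M^1_z$ and $M^2_z$ contributes the $z$ independent edges $u_iv_i$, hence $z$ edges and $2z$ vertices per copy. This gives $n=4+2\cdot 2z=4z+4$ vertices and $m=4+2z$ edges, as claimed.

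Second, I would count crossings edge by edge. Each $\omega_i$ is a simple pseudocircle, so after cutting it inside $c_i$ the edge $u_iv_i$ is a simple arc that does not self-intersect. For $2$-simplicity I would check that every pair of edges meets in at most two points: $\gamma$ and $\kappa$ meet only in $s_1,s_2$; each $\omega_i$ meets $\gamma$ and $\kappa$ exactly twice each by construction; and two pseudocircles meet exactly twice (once in $c_\gamma$, once in $c_\kappa$), while cutting can only remove crossings. Since the two copies are drawn so that no edge of $M^1_z$ meets $vw$ or $ux$ and no edge of $M^2_z$ meets $uv$ or $wx$, I would also argue that the copies contribute no further crossings. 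Collecting these, each four-cycle edge is crossed twice by its partner four-cycle edge and twice by each of the $z$ pseudocircle-edges of the copy attached to it, for $2+2z=2z+2$ crossings; each pseudocircle-edge $u_iv_i$ is crossed twice by $\gamma$, twice by $\kappa$, and twice by each of the $z-1$ sibling edges, again $2+2+2(z-1)=2z+2$. Thus no edge is crossed more than $2z+2$ times, so $R_z$ is $(2z+2)$-plane, and moreover every edge is \emph{saturated}.

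Third come saturation of the drawing and the free-cell count, which I would handle together. Because every edge already carries $2z+2$ crossings, a new edge cannot cross any existing edge, so it could only be inserted between two non-adjacent vertices lying on a common cell; hence I must show that no cell has two non-adjacent vertices on its boundary. The endpoints $u_i,v_i$ sit together in the merged cell that was $c_i$, but they are adjacent; the only remaining vertex-incident cells meet the four-cycle vertices, and the single non-adjacent four-cycle pairs $u,w$ and $v,x$ already lie in distinct cells in \Cref{lem:2s2p3p}, a separation that inserting $M^1_z$ and $M^2_z$ only refines. For the free cells I would use that, since all edges are saturated, a cell is free exactly when no vertex lies on its boundary, so the number of free cells equals the total number of faces minus the number of vertex-incident faces. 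Applying Euler's formula to the planarization exactly as in the proof of \Cref{thm:selfcrossings}, with $|P|=n+x$ and $|C|=m+2x$ where the total number of crossings is $x=2z^2+6z+4$, yields $2z^2+4z+6$ faces; subtracting the $2z+4$ vertex-incident cells (the $2z$ merged cells $c_i$ together with the four cells meeting the four-cycle vertices) leaves $2z^2+2z+2=2(z^2+z+1)$ free cells.

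I expect the main obstacle to be the face-structure analysis underlying both saturation and the free-cell count. Both depend on knowing precisely which vertices are co-facial in the full arrangement $R_z$, which in turn requires tracking the nested cells $c_i$ produced by the imposed cyclic ordering of the $\omega_i$ and confirming that the two copies $M^1_z$ and $M^2_z$ really can be routed inside their designated regions without producing any unintended shared cell. In particular, the delicate point is to argue that each $c_i$ becomes incident to exactly the two adjacent endpoints $u_i,v_i$ and to no other vertex, so that the vertex-incident cells number exactly $2z+4$ and the two bounds above are simultaneously tight.
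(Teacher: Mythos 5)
Your proposal is correct, and for three of the four claims (vertex/edge count, the per-edge crossing count establishing $2$-simplicity and $(2z+2)$-planarity, and saturation via ``every edge is already saturated, so only co-facial non-adjacent vertices could admit a new edge'') it follows the paper's proof essentially verbatim. Where you genuinely diverge is the free-cell count. The paper counts cells of the arrangement directly: it starts from the number of cells in the arrangement of the $z$ pseudocircles, adds the cells created when $uv$ and $wx$ subdivide the picture, and subtracts the $2z$ cells occupied by the vertices $u_i,v_i$, arriving at $z^2+z+1$ free cells per copy. You instead apply Euler's formula to the planarization of all of $R_z$ (as in \Cref{thm:selfcrossings}), compute the total number of faces $2z^2+4z+6$ from $n=4z+4$, $m=2z+4$, $x=2z^2+6z+4$, and subtract the $2z+4$ vertex-incident faces. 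Both routes land on $2(z^2+z+1)$, and your numbers check out (e.g.\ for $z=1$: $12$ faces, $6$ vertex-incident, $6$ free). Your global Euler argument is arguably the more robust of the two --- it sidesteps the paper's somewhat delicate local bookkeeping of how $uv$ and $wx$ subdivide the pseudocircle arrangement --- but it shifts all of the burden onto the claim that there are \emph{exactly} $2z+4$ vertex-incident cells, i.e.\ that each $c_i$ ends up incident to precisely $u_i$ and $v_i$ and that the four cycle vertices occupy four further distinct cells. You correctly flag this as the delicate point; note that the same facts are also needed (and likewise only asserted, not belabored) in the paper's saturation argument, so your proof is no less rigorous than the original on this score.
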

	\begin{proof}
		Let $R_z$ for $z > 0$ be a drawing obtained as described above.
		We begin by showing that $R_z$ is a saturated $2$-simple $2z + 2$-plane drawing.
		It is easy to see that $R_z$ is in fact $2$-simple and $2z + 2$-plane since
		the drawing of the four cycle is $2$-simple and $2$-plane and 
		further the edges added by $M^1_z$ and $M^2_z$ 
		each have $2z$ crossings with other edges in $M^1_z$ and $M^2_z$ respectively.
		Additionally, each edge in $M^1_z$ crosses the edges $uv$ and $wx$ twice and
		each edge in $M^2_z$ crosses the edges $vw$ and $ux$ twice,
		meaning that each of these four edges is crossed a total of $2z + 2$ times.
		Now since the drawing is $2z + 2$-plane and each edge is also crossed precisely this number of times, 
		no edge can be crossed again.
		Furthermore, no two non-adjacent vertices lie inside the same cell.
		Consequently, no edge can be added to $R_z$ and the drawing is also saturated.
		
		It remains to count the number of vertices, edges, and free cells.
		The former two are straight-froward as the four cycle contributes exactly four vertices and four edges,
		and the two copies of $M_z$ each contribute $2z$ vertices and $z$ edges. 
		Hence, $R_z$ has $n_0 = 4z + 4$ vertices and $m_0 = 2z + 4$ edges.
		To count the free cells $f_0$ we to count the free cells introduced by $M^1_z$ and $M^2_z$.
		
		Since they are symmetric we consider here the free cells added by $M^1_z$.
		The free cells incident to edge segments added to $R_z$ by $M^2_z$ can be counted in the same manner.
		We use that the edges were constructed from pseudocircles, where each two pseudocircles intersect twice,
		we find that there are a total of $z^2$ intersection points between the pseudocircles hence $z^2 + z$ cells.
		The edges $uv$ and $wx$ subdivide $2z - 1$ cells into three cells and add two more cells
		which are incident to the crossings between $uv$ and $wx$. 
		In total we obtain $z^2 + z + 2z - 1 + 2 = z^2 + 3z + 1$ many free cells.
		Finally, we have to subtract the $2z$ cells  in which we placed vertices and get that 
		edge segments added by $M^1_z$ are incident to $z^2 + z + 1$ free cells in $R_z$.
		Consequently, there are $2(z^2 + z + 1)$ free cells incident to edge segments added by
		$M^1_z$ and $M^2_z$.
		Since the two free cells of the four cycle were already counted and
		no other free cells exist this is also the total number of free cells in $R_z$.		
	\end{proof}
	
	With \Cref{lem:2s1p,lem:2s2p3p,lem:rz} we are ready to prove the main theorem of this section.
	
	\begin{restatable}{theorem}{thmtwosimple}
		\label{thm:lowsimplicity}
		For $k > 0$ there are arbitrarily large saturated drawings on $n$ vertices with
		$\frac{3(n-2)}{2}$ edges if $k =1$, $\frac{4(n-1)}{5}$ edges if $k=2$ or $k = 3$, and
		with $\frac{2(n - 1)}{k + \frac{2}{k+2}}$ edges if $k > 3$, which 
		are saturated, $2$-simple, and $k$-plane.
\end{restatable}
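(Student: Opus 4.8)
The plan is to read the theorem as an assembly of the three regimes in its statement and to reuse the lemmas of this section wherever possible. The case $k=1$ is exactly \Cref{lem:2s1p}, which already supplies arbitrarily large saturated $2$-simple $1$-plane drawings with $\frac{3(n-2)}{2}$ edges, and the case $k\in\{2,3\}$ is exactly \Cref{lem:2s2p3p}, giving $\frac{4(n-1)}{5}$ edges. Both are therefore immediate, so the real work is the bound $\frac{2(n-1)}{k+\frac{2}{k+2}}$ for $k>3$, which I would prove according to the parity of $k$.

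For even $k>3$ I would instantiate the gadget drawing $R_z$ of \Cref{lem:rz} with $z=\frac{k-2}{2}$, so that $R_z$ is saturated, $2$-simple, and exactly $k$-plane. Feeding its parameters $m_0=2z+4=k+2$, $n_0=4z+4$, and $f_0=2(z^2+z+1)$ into the stashing bound $m=\frac{m_0}{n_0+f_0-1}(n-1)$ of \Cref{lem:stashing} and simplifying $n_0+f_0-1=2z^2+6z+5=\frac{k^2+2k+2}{2}$ yields $m=\frac{2(k+2)}{k^2+2k+2}(n-1)=\frac{2(n-1)}{k+\frac{2}{k+2}}$, as claimed. This is a routine substitution once \Cref{lem:rz,lem:stashing} are in hand.

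The odd case $k>3$ is the crux, because the gadget $R_z$ realizes only even planarity $2z+2$. Taking $z=\frac{k-3}{2}$ merely reproduces the $(k-1)$-plane even construction, in which every edge is crossed $k-1$ times; regarded as a $k$-plane drawing this is not saturated, since each edge still has a free crossing slot. Moreover, for odd $k$ the target fraction $\frac{2(k+2)}{k^2+2k+2}$ is already in lowest terms, since $k^2+2k+2$ is then odd and coprime to $2(k+2)$; consequently the stashing parameters cannot match the even pair $(m_0,\,n_0+f_0-1)=(k+2,\,\tfrac{k^2+2k+2}{2})$, whose second entry is not even an integer, but must realize the doubled pair $(2(k+2),\,k^2+2k+2)$.

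This forces the odd construction to be essentially twice as large as the even one. My plan is to build it by combining two even-style gadget blocks and introducing one additional crossing that links them, so that the combined drawing is $k$-plane with every edge crossed exactly $k$ times, the extra crossing playing the same role as the extra crossings near $v$ and $x$ that upgrade the $2$-plane four-cycle to the $3$-plane $D'_0$ in the proof of \Cref{lem:2s2p3p}. I would then verify $2$-simplicity and $k$-planarity, check that no two non-adjacent vertices share a cell (hence saturation), and re-count vertices, edges, and free cells by the pseudocircle-intersection argument of \Cref{lem:rz}, aiming for $m_0=2(k+2)$ and $n_0+f_0-1=k^2+2k+2$. Realizing this linked construction so that every edge stays saturated and $2$-simple while the free-cell count lands exactly on the required value is the step I expect to be the main obstacle; once it is in place, a final application of \Cref{lem:stashing} delivers the bound for all $k>3$.
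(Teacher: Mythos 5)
Your handling of $k=1$, $k\in\{2,3\}$, and even $k>3$ coincides exactly with the paper's proof: the first two regimes are quoted from \Cref{lem:2s1p,lem:2s2p3p}, and for even $k>3$ the paper likewise sets $z=\frac{k-2}{2}$ in $R_z$ and feeds $m_0=2z+4$, $n_0=4z+4$, $f_0=2(z^2+z+1)$ into \Cref{lem:stashing}, arriving at $\frac{2(k+2)}{k^2+2k+2}$ by the same substitution you perform.

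Where you diverge is the odd case, and there your diagnosis is sharper than your cure. You are right that $R_z$ only realizes even planarity and that $z=\frac{k-2}{2}$ is not an integer for odd $k$; the paper's proof of the theorem silently performs the substitution for all $k>3$ and addresses odd $k$ only in a remark after the theorem. The paper's fix is not the doubled gadget you sketch: it starts from the even-$k$ construction and adds a single new edge $ab$, with $a$ and $b$ placed in the two formerly free cells of the four-cycle, drawn so that it crosses $ux$, $wx$, and every edge of both $M_z$ copies once more (together with the dotted extra crossing of \Cref{fig:circlelowsimple}), so that every old edge reaches $k+1$ crossings while only one edge, two vertices, and two free cells change in the count. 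The ratio recorded there is $\frac{2}{k+\frac{15}{k+3}-1}$, which does not match the theorem's $\frac{2}{k+\frac{2}{k+2}}$ — so the gap you flag is genuine and is present in the paper itself. Your ``two linked blocks'' plan is a plausible alternative route, but as written it is unrealized: you would still need to exhibit the linking crossing pattern, verify $2$-simplicity and saturation, and recount the free cells to hit $n_0+f_0-1=k^2+2k+2$ exactly. As it stands, your argument establishes the claimed bound only for even $k>3$.
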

	\begin{proof}
		\Cref{lem:2s1p,lem:2s2p3p} proof the cases of $k=1$, $2$, and $3$.
		
		For the case of $k > 3$ we consider the drawing $R_z$ with $z > 0$.
		From \Cref{lem:rz} we know that $R_z$ is a saturated $2$-simple $2z + 2$-plane drawing.
		Consequently, if we choose $z = \frac{k-2}{2}$ the resulting drawing is in particular $k$-plane.
		Furthermore, by the same lemma, we know that $R_z$ has $n_0 = 4z+4$ vertices, 
		$m_0 = 2z + 4$ edges, and $f_0 = 2(z^2+z+1)$ free cells.
		Using \Cref{lem:stashing} we obtain a bound on the edge-vertex ratio in terms of $z$
		\begin{eqnarray*}\frac{m_0}{n_0 + f_0 - 1}= \frac{2z + 4}{4z + 4 + 2z^2 + 2z + 2 - 1}= \frac{2z + 4}{2z^2 + 6z + 5}= \frac{z + 2}{z^2 + 3z + \frac{5}{2}}.
		\end{eqnarray*}
		Substituting $\frac{k-2}{2}$ for $z$ in the above equation yields
		\begin{eqnarray*}
			\frac{z + 2}{z^2 + 3z + \frac{5}{2}}&=& \frac{\frac{k-2}{2} + 2}{(\frac{k-2}{2})^2 + 3\frac{k-2}{2} + \frac{5}{2}}\\
			&=& \frac{\frac{k-2 + 4}{2}}{\frac{k^2-4k+4}{4} + \frac{6k-12}{4} + \frac{10}{4}}\\
			&=& \frac{\frac{k+2}{2}}{\frac{k^2+2k+2}{4}}\\
			&=& \frac{2k+4}{k^2+2k+2}\\
			&=& \frac{2(k+2)}{k(k+2)+2}\end{eqnarray*}		
		as desired.		
\end{proof}
	
	{\itshape Remark.} The construction for $k > 3$ in \Cref{thm:lowsimplicity} works for every even value of $k$.
	For odd values we can begin with a saturated $2$-simple $k$-plane drawing $D$ for $k > 3$ and even.
	We add one new edge $ab$ between new vertices $a$ and $b$ which we place inside the two free cells and
	inside all edges of two $M_z$ copies.
	We draw $ab$ such that it crosses the edges $ux$ and $wx$ 
	as well as all edges in the two copies of $M_z$ in $D$.
	Then, by adding the dotted variation for $uv$ (see \Cref{fig:circlelowsimple}) we obtain a drawing $D'$ 
	in which all edges but $ab$ are crossed $k+1$ times.
	In fact, $ab$ itself has only $k$ crossings.
	This is not a problem though, as the vertices $a$ and $b$ are placed in free cells.
	Consequently, by adding one edge and two vertices as well as subtracting two free cells we obtain that $D'$
	is saturated $2$-simple $k$-plane drawing with $k > 3$ and $k$ odd and if $D'$ has $n$ vertices it has
	\begin{eqnarray*}
		\frac{2}{k + \frac{15}{k+3}-1}
	\end{eqnarray*} edges.

	\subsection{Simple drawings}
	\label{sec:simple}
	
	Auer et al.~\cite{auerSparseMaximal2013} 
	presented an elegant construction 
	for saturated simple $2$-plane drawings. 
	In this section we generalize their construction to all values $k > 0$. 
	The resulting drawings are also reminiscent of the ones in~\cite{kynclSaturatedSimple2015}.
	The construction for $k=1$ achieves the same edge-vertex ratio 
as the construction by Brandenburg et al.~\cite{brandenburgDensityMaximal2013}
	for saturated simple $1$-plane drawings. 
	
	The constructions proving \Cref{thm:kpsimple} 
	are illustrated in \Cref{fig:simple}. 
	In the figure, all the vertices on the left and on the right are identified. 
	Thus, it is more easily visualized on a vertical cylinder. 
	
	We first describe the general construction and
	proof that for all $k > 0$ it yields a saturated simple $k$-plane drawing.
	Let $k > 1$ and $t \geq 0$.
	Consider $z = t + k + 3$ vertices $u_i$ which we draw sorted from $i=1$ to $z$ 
	along a vertical line on the surface of the cylinder from top to bottom.
	We then add all edges $u_iu_{i+1}$ and $u_iu_j$ with $j = i + k + 2$ for all $1 \leq i \leq z$,
	ignoring non-existent entries.
	We also add all missing edges $u_1u_i$ for $i = 1,\ldots,k+2$ and $u_{z}u_j$ for $j = t + 1,\ldots,z$.
	
	For $k = 1$ we modify the above construction slightly.
	We consider $z = t + 6$ vertices $u_i$ and add the same edges as above.
	Additionally, we also add all possible edges $u_iu_{i+2}$ for $1 \leq i \leq z$.
	Let $S_{k,t}$ be the resulting construction.
	
	\begin{restatable}{lemma}{lemkpsaturated}
		\label{lem:kpsaturated}
		For every $k > 0$ and $t \geq 0$, $S_{k,t}$ is a saturated simple $k$-plane drawing.
	\end{restatable}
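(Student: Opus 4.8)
The plan is to verify in turn the three properties in the statement: that $S_{k,t}$ is \emph{simple}, that it is \emph{$k$-plane}, and that it is \emph{saturated}. Throughout I would classify the edges into three types: the \emph{short} edges $u_iu_{i+1}$, the \emph{long} edges $u_iu_{i+k+2}$, and the \emph{boundary} edges incident to $u_1$ or $u_z$ (together with the additional distance-two edges $u_iu_{i+2}$ in the special case $k=1$). Simplicity is read directly off the drawing rule depicted in \Cref{fig:simple}: each edge is a single arc on the cylinder, so no edge self-intersects; edges sharing an endpoint are routed to meet only there; and any two independent edges are routed to cross at most once. I would make this precise by recording, for each ordered pair of edge types, the combinatorial crossing pattern forced by the order of the endpoints along the vertical line of vertices and their winding around the cylinder, and checking that it never produces a second crossing of the same pair or a crossing between adjacent edges.

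For $k$-planarity the heart is a crossing count per edge. The key observation is that a long edge $u_iu_{i+k+2}$ crosses exactly the $k$ short edges $u_{i+1}u_{i+2},\dots,u_{i+k}u_{i+k+1}$ strictly spanned by it, and (by the routing) no other long edge and no boundary edge; dually, a short edge $u_ju_{j+1}$ is crossed exactly by the $k$ long edges $u_{j-k}u_{j+2},\dots,u_{j-1}u_{j+k+1}$ that span it. Hence every \emph{interior} edge is crossed exactly $k$ times, and is thus a saturated edge in the sense of \Cref{sec:norestrictions}. It then remains to check that the boundary edges incident to $u_1$ and $u_z$, and in the case $k=1$ the distance-two edges, accumulate at most $k$ crossings each; this is a finite computation at the two ends of the cylinder. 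Together these give that $S_{k,t}$ is $k$-plane.

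The main obstacle is saturation: I must show that no arc joining a non-adjacent pair $u_a,u_b$ (with $a<b$) can be added while keeping the drawing simple and $k$-plane. The first step is the weak form used elsewhere in the paper, namely that no two non-adjacent vertices lie on a common cell of the arrangement; for this I would describe the cells incident to each vertex and argue that the saturated long edges shield each $u_i$ so that its incident cells touch only the cells of its graph-neighbours. The stronger statement is then obtained by a case analysis on the distance $b-a$ and on the position of the pair relative to the two ends: for interior pairs at distance $2\le b-a\le k+1$ or $b-a\ge k+3$ I would show that every connecting arc must cross one of the fully saturated edges separating $u_a$ from $u_b$, which is forbidden by $k$-planarity. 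The delicate part is at the two ends of the cylinder, where the boundary edges are crossed \emph{fewer} than $k$ times and so cannot be excluded by $k$-planarity alone; there I would instead invoke simplicity, showing that any admissible arc is forced either to cross an edge already incident to $u_a$ or $u_b$, or to cross some edge a second time. Handling all of these cases uniformly---including the wrap-around of the cylinder in the routing and the slightly larger edge set when $k=1$---is where the real work lies, and is the step I expect to require the most care.
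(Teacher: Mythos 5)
Your proposal follows essentially the same route as the paper's (terse) proof: both verify simplicity and $k$-planarity directly from the routing, identify the interior short edges $u_iu_{i+1}$ and the long edges $u_iu_{i+k+2}$ as saturated via the same crossing counts, and conclude that any addable edge must cross one of these saturated edges. The only divergence is at the two ends of the cylinder, where you plan to fall back on simplicity, whereas the paper closes those cases by observing that all edges from $u_1$ (resp.\ $u_z$) to the nearest $k+2$ vertices are already present---so the wrapping edges incident to $u_1$ and $u_z$ saturate the nearby short edges and no addable edge is confined to the ends; both routes appear viable.
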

	\begin{proof}
		Clearly, the construction is simple and $k$-plane for any value of $k$ and $t$.
		Let $z$ be the number of vertices in $S_{k,t}$.
		Regardless of the values of $k$ and $t$ we find the path $u_1, u_2, \ldots, u_z$ in $S_{k,t}$.
		Furthermore, every edge in the path $u_iu_{i+1}$ for $1 < i < z - 1$ is crossed $k$ times,
		hence these edges are saturated.
		Additionally, each edge $u_iu_j$ for $j = i + k + 2$ crosses $k$ edges $u_au_b$ with 
		$i + 1 \leq a \leq i + k$ and $i + 2 \leq b \leq i + k + 1$,
		consequently these edges are saturated.
		Finally, we added all the edges incident to $u_1$ and $u_z$ to vertices with indices lower than $k+3$.
		For $k > 1$ This means, that any edge that is could potentially be added has to either cross
		an edge $u_iu_{i+1}$ or $u_iu_j$ with $j = i + k + 2$.
		Moreover, for $k = 1$ the same holds after adding the edges $u_iu_{i+2}$.
		Consequently, no edge can be added without violating $k$-planarity of the drawing and
		the $S_{k,t}$ is saturated.
	\end{proof}	
	
	\begin{figure}[bt]
		\centering
		\begin{minipage}[t]{.32\textwidth}
			\centering
			\includegraphics[page=1]{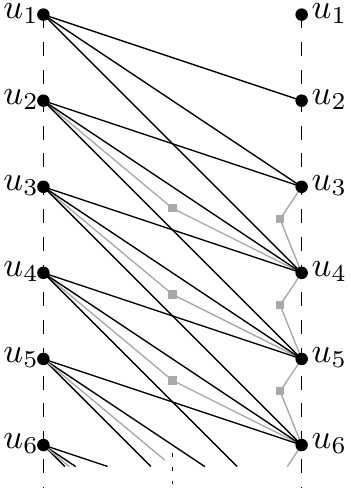}
			\subcaption{Construction for $k=1$}
			\label{fig:1psimple}
		\end{minipage}
		\hfill
		\begin{minipage}[t]{.32\textwidth}
			\centering
			\includegraphics[page=2]{simple_upperbound_cylinder}
			\subcaption{Construction for $k=2$}
			\label{fig:2psimple}
		\end{minipage}
		\hfill
		\begin{minipage}[t]{.32\textwidth}
			\centering
			\includegraphics[page=3]{simple_upperbound_cylinder}
			\subcaption{Construction for $k>2$}
			\label{fig:3psimple}
		\end{minipage}
		\caption{Construction for saturated simple $k$-plane drawings. 
			The dashed left and right sides of the drawings are identified.
}
		\label{fig:simple}
	\end{figure}
	
	We divide the proof of \Cref{thm:kpsimple} into several lemmas.
	\Cref{lem:simple1p} proves the result for saturated simple $1$-plane drawings,
	in \Cref{lem:simple2p} we show the bound for saturated simple $2$-plane drawings, 
	and in \Cref{lem:simplekp} we give the proof for saturated simple $3$-plane drawings.
	The latter two lemmas also make use of the intermediate lemma \Cref{lem:sktsize},
	which shows how many vertices and edges our construction has for $k > 1$.
	
	\begin{lemma}
		\label{lem:simple1p}
		There are arbitrarily large drawings on $n$ vertices with $\frac{7n-9}{3}$ edges,
		which are saturated, simple, and $1$-plane.
	\end{lemma}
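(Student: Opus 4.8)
The plan is to turn the lemma into a routine counting statement by reusing everything already established about the construction. Concretely, I would fix an integer $t \ge 0$ and take the drawing $S_{1,t}$ from the general construction specialized to $k=1$. By \Cref{lem:kpsaturated} this drawing is already known to be saturated, simple, and $1$-plane, so I do not need to re-examine crossings, simplicity, or saturation at all. The whole burden of the proof is therefore to read off the number of vertices $n$ and the number of edges $m$ of $S_{1,t}$ and to verify the identity $m=\frac{7n-9}{3}$; since $t$ is an arbitrary nonnegative integer, letting $t\to\infty$ then yields arbitrarily large such drawings.

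The vertex count is immediate: $S_{1,t}$ uses $z=t+6$ vertices $u_1,\dots,u_z$ along the vertical line on the cylinder (\Cref{fig:1psimple}), so $n=t+6$. For the edges I would group them according to the defining families of the construction: the path edges $u_iu_{i+1}$, the long chords $u_iu_{i+3}$ (the $j=i+k+2$ family at $k=1$), the short chords $u_iu_{i+2}$, and the extra closure edges $u_1u_i$ and $u_zu_j$ at the top and bottom. The path and long-chord families each contribute a count linear in $z$, the short chords contribute the remaining edges inserted into the free cells created by the crossings of the long chords with the path, and the closures contribute only a bounded number of genuinely new edges, because most of the pairs they list already appear among the path and chord edges. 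Summing these contributions, removing the overlaps at the two ends, and rewriting in terms of $n$ should collapse to $\frac{7n-9}{3}$.

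I expect the leading term $\frac{7}{3}n$ to come out cleanly from the repeating degree pattern of the interior vertices, and the genuinely delicate part, the main obstacle, to be the additive constant, i.e. getting the $-9$ in the numerator exactly right. This requires careful bookkeeping at the two ends: deciding precisely which of the closure pairs $u_1u_i$ $(i\le k+2)$ and $u_zu_j$ $(j\ge t+1)$ coincide with path or chord edges already counted, and making sure that every short chord which can legitimately be placed in a free cell is counted once and only once near the boundary. I would therefore carry out the boundary count explicitly for the first and last few indices, confirm the interior count against the periodic pattern, and check that the totals assemble to $\frac{7(t+6)-9}{3}=\frac{7t+33}{3}$. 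Finally, since the construction is defined for every $t\ge 0$, this produces saturated simple $1$-plane drawings on arbitrarily many vertices with exactly $\frac{7n-9}{3}$ edges, as claimed.
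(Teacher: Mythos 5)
There is a genuine gap: you identify the final drawing with $S_{1,t}$ itself, but $S_{1,t}$ does not have $\frac{7n-9}{3}$ edges. With $n=z=t+6$ vertices, the degree sequence of $S_{1,t}$ (two vertices of degree $3$, two of degree $4$, two of degree $5$, and $t$ of degree $6$) gives $m_z=3t+12=3n-6$ edges, an edge--vertex ratio tending to $3$, whereas your target $\frac{7(t+6)-9}{3}=\frac{7t+33}{3}$ is not equal to $3t+12$ for any $t$ (and is not even an integer unless $3\mid t$). No amount of careful bookkeeping at the two ends will close this: the discrepancy is in the leading term, not the additive constant. Your description of the short chords $u_iu_{i+2}$ as edges ``inserted into free cells'' also conflates two different things; those chords are part of the base construction needed for saturation, not an augmentation.

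The missing idea is the second stage of the paper's argument: since $S_{1,t}$ is simple and $1$-plane it has no free cells, but it does have cells with exactly two vertices on their boundary, and one can stash a new degree-two vertex into each such cell, joined to both boundary vertices (the trick from Brandenburg et al.\ and Bar\'at--T\'oth). The paper counts $2(z-5)+1=2t+3$ such cells, so the augmented drawing $H_{1,t}$ has $n=(t+6)+(2t+3)=3t+9$ vertices and $m=(3t+12)+2(2t+3)=7t+18$ edges, which is exactly $\frac{7n-9}{3}$. Each new vertex contributes only two edges, which is what pulls the ratio down from $3$ to $\frac{7}{3}$. Note also that saturation of $H_{1,t}$ is not literally given by \Cref{lem:kpsaturated}; one must observe that the newly placed vertices are already joined to every vertex on the boundary of their cell, so no further edge can be added.
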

	\begin{proof}
		Let $S_{1,t}$ be a drawing constructed as above for $t \geq 0$.
		By \Cref{lem:kpsaturated} we know that $S_{1,t}$ is saturated, simple, and $1$-plane.
		Finally, consider $k=1$ and $t \geq 0$ and let $S_{1,t}$ be the resulting construction.
		For the number of vertices we get $z = t + 6$.
		Again counting the sum of degrees $d$ we get that 
		$u_1$ and $u_t$ have degree $3$, $u_2$ and $u_{z-1}$ have degree $4$,
		$u_3$ and $u_{z-2}$ have degree $5$, and every other vertex has degree $6$.
		Hence, $d = 2(3 + 4 + 5) + 6t = 6t + 24$ and consequently there are $m_z = 3t + 12$ edges.
		
		As $S_{1,t}$ is simple and $1$-plane there are no cells with zero or only one vertex on its boundary.
		Hence, we consider cells with two vertices on their boundary.
		Stashing vertices incident to these two vertices into the cells then improves the edge-vertex ratio.
		This was also used in~\cite{brandenburgDensityMaximal2013} and~\cite{baratImprovementsDensity2018}.
		We find that every vertex $u_i$ for $i = 4,\ldots, z-3$ is on the boundary of four such cells.
		Hence, accounting for the cells that are also incident to $u_3$ and $u_{z-2}$, 
		there are $2(z - 5) + 1 = 2t + 3$ cells with only two vertices on their boundaries.
Stashing all these degree two vertices into $S_{1,t}$ we obtain the drawing $H_{1,t}$ with
		\begin{eqnarray*}
			n = t + 6 + 2t + 3 =3t + 9
		\end{eqnarray*}
		vertices and 
		\begin{eqnarray*}
			m = m_z + 2(2t + 3) = 3t + 12 + 4t + 6 = 7t + 18
		\end{eqnarray*}
		edges.
		Rearranging for $t$ we get that 
		\begin{eqnarray*}
			t = \frac{n - 9}{3}
		\end{eqnarray*}
		and hence there are
		\begin{eqnarray*}
			m = 7t + 18 = 7\frac{(n-9)}{3} + 18 = \frac{7n - 63 + 54}{3} = \frac{7n - 9}{3}
		\end{eqnarray*}
		edges.
	\end{proof}
	
	For $k > 1$ we compute the number of vertices and edges in the following lemma.
	
	\begin{lemma}
		\label{lem:sktsize}
		For $k > 1$ and $t \geq 0$ $S_{k,t}$ has $z = k + t + 3$ vertices and $m_z = 3k + 2t + 3$ edges.
	\end{lemma}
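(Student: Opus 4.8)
The plan is to read the vertex count directly off the construction and to obtain the edge count by partitioning the edges of $S_{k,t}$ into disjoint families, one per building rule, and counting each. The number of vertices requires no argument: by definition $S_{k,t}$ is drawn on $u_1,\dots,u_z$ with $z = k+t+3$, so there are $z = k+t+3$ vertices.

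For the edges I would distinguish four families: (i) the \emph{path} edges $u_iu_{i+1}$; (ii) the \emph{long} edges $u_iu_{i+k+2}$; (iii) the extra edges at $u_1$, i.e.\ the missing $u_1u_i$ with $1\le i\le k+2$; and (iv) the extra edges at $u_z$, i.e.\ the missing $u_zu_j$ with $t+1\le j\le z$. The path family contributes $z-1 = t+k+2$ edges. For the long family the constraint $i+k+2\le z = t+k+3$ forces $1\le i\le t+1$, giving $t+1$ edges. In family (iii) the value $i=1$ is a self-loop and $i=2$ reproduces the path edge $u_1u_2$, while the long edge out of $u_1$ is $u_1u_{k+3}$, which lies outside the range $i\le k+2$; hence the genuinely new edges are $u_1u_3,\dots,u_1u_{k+2}$, exactly $k$ of them. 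Family (iv) is symmetric: $j=z$ is a self-loop, $j=z-1$ reproduces a path edge, and $j=t+1$ reproduces the long edge $u_{t+1}u_z$ (since $z-(k+2)=t+1$), leaving the new edges $u_zu_{t+2},\dots,u_zu_{t+k+1}$, again $k$ of them.

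Summing the four families gives $m_z = (t+k+2) + (t+1) + k + k = 3k+2t+3$, as claimed. The one point needing genuine care — and the main, if modest, obstacle — is to verify that the four families are pairwise disjoint so that no edge is counted twice. The word ``missing'' in the construction already excludes the path and long edges falling inside the index ranges of (iii) and (iv); beyond that, a short check confirms that an edge with endpoint $u_1$ and other index $\le k+2$ coincides neither with a long edge nor with an edge of family (iv) (whose smaller endpoint index is at least $t+2\ge 2$), and symmetrically for $u_z$. Once disjointness is established, the count follows. Alternatively, one could mirror the degree-sum argument used for $k=1$ in \Cref{lem:simple1p}, computing $\deg u_1 = \deg u_z = k+2$ and treating the interior vertices through their path- and long-edge incidences; the direct partition above is preferable as it avoids the boundary case analysis that the degree sum would require.
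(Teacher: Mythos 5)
Your proof is correct, and it takes a different (though equally elementary) route from the paper's. The paper computes the degree of each vertex class --- $u_1$ and $u_z$ have degree $k+2$, $u_2$ and $u_{z-1}$ have degree $3$, and the remaining $t+k-1$ vertices have degree $4$ --- and applies the handshake lemma to get $m_z = (k+2) + 3 + 2(t+k-1) = 3k+2t+3$. You instead partition the edge set by construction rule (path edges, long edges $u_iu_{i+k+2}$, and the two completion families at $u_1$ and $u_z$) and count each family after discarding self-loops and duplicates. The arithmetic agrees, and your de-duplication is the right care to take: in particular you correctly observe that $u_1u_{k+3}$ and $u_{t+1}u_z$ are already long edges and that families (iii) and (iv) cannot share an edge since $z = t+k+3 > k+2$. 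The trade-off is essentially where the boundary-case bookkeeping lives: the degree-sum argument hides the overlap issues inside the claim that $u_2$ and $u_{z-1}$ have degree exactly $3$ (which itself silently relies on, e.g., $u_1u_2$ not being double-counted between the path and the completion family), whereas your partition makes those coincidences explicit. Either way the lemma is established; your version is somewhat more transparent about why no edge is counted twice, at the cost of a slightly longer case check.
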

	\begin{proof}
		Let $S_{k,t}$ be as above for some $k>1$ and $t \geq 0$ 
		we get that there are $z = t + k + 3$ vertices.
		To compute the number of edges we look at the degrees of all vertices $u_i$ in $S_{k,t}$.
		For $u_1$ and $u_z$ we find that both vertices have degree $k+2$,
		for $u_2$ and $u_{z - 1}$ we get degree $3$, and
		all remaining vertices have degree four.
		In total the sum $d$ of degrees in dependence on $t$ and $k$ is
		\begin{eqnarray*}
			d = 2\cdot(k+2) + 2 \cdot 3 + 4\cdot (t + k - 1).
		\end{eqnarray*}
		Hence, there are 
		\begin{eqnarray*}
			m_z = (k+2) + 3 + 2\cdot(t + k - 1) = 3k + 2t + 3
		\end{eqnarray*}
		edges in $S_{k,t}$ if $k > 1$.
	\end{proof}
	
	\begin{lemma}
		\label{lem:simple2p}
		There are arbitrarily large drawings on $n$ vertices with $\frac{4n+7}{3}$ edges,
		which are saturated, simple, and $2$-plane.
	\end{lemma}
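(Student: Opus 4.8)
The plan is to reuse the drawing $S_{2,t}$ and dilute its edge-vertex ratio by stashing, exactly as in the $1$-plane case (\Cref{lem:simple1p}), but exploiting that a $2$-plane drawing admits cells with fewer vertices on their boundary. By \Cref{lem:kpsaturated} and \Cref{lem:sktsize}, for every $t \ge 0$ the drawing $S_{2,t}$ is saturated, simple, and $2$-plane, with $z = t+5$ vertices and $m_z = 2t+9$ edges. On its own its ratio tends to $2$, which is too large, so the work is to add many low-degree vertices that pull the ratio down to $\tfrac43$.

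The key observation is that, because the drawing is $2$-plane rather than $1$-plane, each long edge $u_iu_{i+4}$ crosses the two short edges $u_{i+1}u_{i+2}$ and $u_{i+2}u_{i+3}$, and these crossings carve out, near each interior vertex, small cells that are bounded by saturated edge segments and contain only that single vertex on their boundary. First I would make this precise: identify, for each interior vertex $u_i$, the two incident cells whose boundary contains no other vertex, and argue from a careful analysis of the boundary effects at the extremal vertices that there are exactly $2t$ such one-vertex cells in total. This is the analogue of the $1$-plane count of $2t+3$ two-vertex cells, but shifted down by one since the extra admissible crossing lets a saturated cell have only one vertex on its boundary.

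Next, into each such one-vertex cell I would place a new vertex and connect it, by an edge drawn entirely inside the cell, to the unique vertex on the boundary, producing $2t$ new degree-one vertices. I then verify the three properties. The new edges cross nothing, so the drawing stays simple and $2$-plane; each new vertex is trapped inside its cell by saturated edges and is already adjacent to the only vertex it could reach, so no edge incident to it can be added; and since the original drawing was saturated and no crossing was removed, the enlarged drawing remains saturated. As the chosen cells are pairwise disjoint, the stashings do not interfere with one another.

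Finally I would count: stashing $2t$ degree-one vertices yields $n = z + 2t = 3t+5$ vertices and $m = m_z + 2t = 4t+9$ edges. Eliminating $t$ via $t = \tfrac{n-5}{3}$ gives $m = \tfrac{4(n-5)}{3} + 9 = \tfrac{4n+7}{3}$, and letting $t \to \infty$ produces arbitrarily large drawings with the claimed properties. I expect the main obstacle to be the second step: pinning down precisely which cells of $S_{2,t}$ have exactly one vertex on their boundary, proving the count is exactly $2t$, and checking the saturation condition that none of the stashed vertices can be joined to any further vertex.
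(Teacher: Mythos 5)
Your proposal is correct and follows essentially the same route as the paper: stash a pendant vertex into each cell of $S_{2,t}$ bounded by saturated edges with only one vertex on its boundary, count $2t$ such cells, and compute $n=3t+5$, $m=4t+9$, giving $\frac{4n+7}{3}$. The only detail you defer --- the exact distribution of one-vertex cells --- is resolved in the paper by noting there are two such cells per vertex $u_i$ with $5\leq i\leq z-4$, one per vertex $u_i$ with $i\in\{2,3,4,z-3,z-2,z-1\}$, and none for $u_1,u_z$, which indeed totals $2(z-8)+6=2t$.
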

	\begin{proof}
		Let $S_{2,t}$ be a drawing constructed as above with $t \geq 0$.
		By \Cref{lem:kpsaturated} we know that $S_{2,t}$ is saturated, simple, and $2$-plane.
		We also know by \Cref{lem:sktsize}that it has $z =t + 5$ vertices and
		$m_z = 2t + 9$ edges.
		We see that the number of free cells per edge is clearly zero.
		Furthermore, the number of edges in $S_{2,t}$ is approximately $2n$.
		It turns out that in this situation the edge-vertex ratio can be lowered by stashing a pendant vertex
		into each cell that is bounded by edges with $k$ crossings and has only one vertex on its boundary.
		
		Let $S_{2,t}$ be a construction as above for $t \geq 0$.
		Then there are two cells with only one vertex on their boundary 
		per vertex $u_i$ with $5 \leq i \leq z - 4$,
		one such cell per vertex $u_i$ with $i \in \{2,3,4,z-3,z-2,z-1\}$, and
		$u_1$ and $u_z$ are not incident to any such cell.
		Hence, we can stash $2(z - 8) + 6 = 2(t + 5 - 8) + 6 = 2t$ pendant vertices.
		Let $H_{2,t}$ be the drawing obtained by stashing these $2t$ pendant vertices into $S_{2,t}$ and
		let $n$ be the number of vertices and $m$ the number of edges in $S_{2,t}$.
		Then, we have $n = z + 2t = 3t + 5$ and $m = m_z + 2t$.
		Rearranging for $t$ we get that 
		\begin{eqnarray*}
			t = \frac{n - 5}{3}
		\end{eqnarray*}
		and hence we obtain that there are
		\begin{eqnarray*}
			m &=& 2t + 9 + 2t = 4t + 9 
			= 4\frac{n - 5}{3} + 9
			= \frac{4n + 7}{3}
		\end{eqnarray*}
		edges.		
	\end{proof}
	
	{\itshape Remark.} Before proving the case for $k > 2$ we note that adding pendant vertices 
	would not decrease the edge-vertex ratio in the following proof.
	This can be shown similarly to the argument in \Cref{lem:stashingedge}.
	
	\begin{lemma}
		\label{lem:simplekp}
		For $k > 2$ there are arbitrarily large drawings on $n$ vertices with 
		$\frac{2n}{k-1} + \frac{3k - 2 -\frac{9}{k}}{1 - \frac{1}{k}}$ edges,
		which are saturated, simple, and $k$-plane.		
	\end{lemma}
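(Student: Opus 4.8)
The plan is to reuse the construction $S_{k,t}$ and argue exactly as in the proofs of \Cref{lem:simple1p,lem:simple2p}: start from $S_{k,t}$, stash vertices into suitable cells, write the vertex and edge counts as functions of $t$, and then eliminate $t$. By \Cref{lem:kpsaturated}, $S_{k,t}$ is saturated, simple, and $k$-plane, and by \Cref{lem:sktsize} it has $z = k+t+3$ vertices and $m_z = 3k+2t+3$ edges. In contrast to the cases $k=1,2$, here the target edge--vertex ratio is $\tfrac{2}{k-1}\le 1$, so by the Remark preceding this lemma (which adapts \Cref{lem:stashingedge}) introducing pendant vertices does not decrease the ratio; instead I would stash \emph{isolated} vertices, which requires the free cells of $S_{k,t}$.

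The crux is to show that, for $k>2$, the drawing $S_{k,t}$ has exactly $f_0=(k-2)t$ free cells. First I would pin down the local picture of \Cref{fig:3psimple}: every long edge $u_iu_{i+k+2}$ is saturated and crosses precisely the $k$ consecutive path edges $u_{i+1}u_{i+2},\dots,u_{i+k}u_{i+k+1}$, so two long edges never cross and the long edges weave over the path in a periodic pattern of period one vertex. I would then read off, from this periodic weave, the cells that carry no vertex on their boundary and are bounded only by segments of the (saturated) long and path edges, showing that each unit increment of $t$ contributes $k-2$ such free cells. The delicate point, and the main obstacle, is the boundary: the extremal vertices together with the extra edges $u_1u_i$ and $u_zu_j$ distort the weave, and one must verify that the top and bottom fringes contribute no additive constant, so that the count is \emph{exactly} $(k-2)t$ rather than $(k-2)t+O(1)$. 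I expect this to follow by matching the fringe cells that are lost against the ones newly created by the boundary edges, and I would cross-check the claim against the sanity values $k=2$ (no free cells, consistent with \Cref{lem:simple2p} needing pendant vertices) and $k=3$ (exactly $t$ free cells).

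Granting $f_0=(k-2)t$, I would place one isolated vertex into each free cell to obtain a drawing $H_{k,t}$. Since each such cell has no vertex on its boundary and is enclosed by saturated edges, the new vertices remain isolated and no edge can be added to reach them without crossing a saturated edge; the original drawing was already saturated, so $H_{k,t}$ is again saturated, simple, and $k$-plane, and the new vertices add no edges. Hence $H_{k,t}$ has $n = z+f_0 = (k-1)t + k + 3$ vertices and $m = m_z = 2t + 3k + 3$ edges.

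Finally I would eliminate $t$. From $t=\tfrac{n-k-3}{k-1}$ we get $m = 3k+3+2t = \tfrac{(3k+3)(k-1)+2(n-k-3)}{k-1} = \tfrac{2n+3k^2-2k-9}{k-1}$, and rewriting the constant term as $\tfrac{3k^2-2k-9}{k-1} = \tfrac{3k-2-9/k}{1-1/k}$ yields the claimed bound $m=\tfrac{2n}{k-1}+\tfrac{3k-2-\frac{9}{k}}{1-\frac{1}{k}}$. As $n$ grows without bound with $t$, this produces arbitrarily large drawings with the stated number of edges, completing the proof.
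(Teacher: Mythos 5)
Your proposal follows essentially the same route as the paper: reuse $S_{k,t}$ with \Cref{lem:kpsaturated} and \Cref{lem:sktsize}, count the free cells, stash one isolated vertex into each, and eliminate $t$; your claimed count $f_0=(k-2)t$ is exactly the paper's $(k-2)(z-2k)+(k-3)(k-2)=(k-2)(z-k-3)$, and the final algebra agrees. The only difference is presentational: the paper obtains the count by attributing $k-2$ free cells to each path edge $u_iu_{i+1}$ in the middle range and $1,\dots,k-3$ to the fringe edges, which is the explicit bookkeeping you deferred.
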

	\begin{proof}
		Let $S_{k,t}$ be a drawing constructed as above for $k > 2$ and $t \geq 0$.
		By \Cref{lem:kpsaturated} we know that $S_{k,t}$ is saturated, simple, and $k$-plane.
		We also know by \Cref{lem:sktsize} that it has $z = k + t +3$ vertices and
		$m_z = 3k + 2t + 3$ edges.
		Consider the edges $u_iu_{i+1}$ for $i = k,\ldots,z-k-1$,
		each of these edges forms the upper boundary of $k-2$ free cells.
		Furthermore, for $k > 3$ the edges $u_iu_{i+1}$ for $i = 3,\ldots,k-1$ and $z-4,\ldots,z-k$
		bound each $1,\ldots,k-3$ free cells.
		Hence, there are
		\begin{eqnarray*}
			(k-2)(z - 2k) + 2\cdot\sum_{i=1}^{k-3}i &=& (k-2)(z-2k) + (k-3)(k-2) \\ &=& (k-2)z - k^2 - k + 6
		\end{eqnarray*}
		free cells.
By placing one isolated vertex into each free cell of $S_{k,t}$ we obtain the construction $H_{k,t}$ 
		with 
		\begin{eqnarray*}
			n &=& z + (k-2)z - k^2 - k + 6 \\
			&=& (k-1)z - k^2 - k + 6 \\
			&=& (k-1)(t + k + 3) -k^2 - k + 6 \\
			&=& kt + k^2 + 3k - t - k - 3 - k^2 - k + 6 \\
			&=& (k-1)t + k + 3
		\end{eqnarray*} 
		vertices and $m = m_z$ edges.
		Rearranging the number of vertices for $t$ gives
		\begin{eqnarray*}
			t = \frac{n-k-3}{k-1}
		\end{eqnarray*}
		and hence we obtain that there are
		\begin{eqnarray*}
			m &=& 3k + 2t + 3 \\
			&=& 3k + 2 \frac{n-k-3}{k-1} + 3 \\
			&=& \frac{3k(k-1) + 2(n-k-3) + 3(k-1)}{k-1}\\
			&=& \frac{3k^2-3k + 2n-2k-6 + 3k-3}{k-1}\\
			&=&\frac{2n}{k-1} + \frac{3k^2 - 2k -9}{k-1}\\
			&=& \frac{2n}{k-1} + \frac{3k - 2-\frac{9}{k}}{1-\frac{1}{k}}.
		\end{eqnarray*}
		edges.
	\end{proof}

	Combining \Cref{lem:simple1p,lem:simple2p,lem:simplekp} we obtain the desired theorem.

	\begin{restatable}{theorem}{thmkpsimple}
		\label{thm:kpsimple}
		There are arbitrarily large saturated simple $k$-plane drawings on $n$ vertices with: 
		$\frac{7n - 9}{3}$ edges if $k = 1$, 
		$\frac{4n + 7}{3}$ edges if $k = 2$, and 
$\frac{2n}{k-1} + \frac{3k - 2-\frac{9}{k}}{1-\frac{1}{k}}$
edges
		if $k > 2$. 
\end{restatable}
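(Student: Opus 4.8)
The plan is to prove the theorem by a straightforward case distinction on $k$, since each of the three claimed edge counts has already been established as a separate lemma. For $k = 1$ I would invoke \Cref{lem:simple1p}, which provides arbitrarily large saturated simple $1$-plane drawings on $n$ vertices with $\frac{7n-9}{3}$ edges. For $k = 2$ I would invoke \Cref{lem:simple2p}, yielding the $\frac{4n+7}{3}$ bound. For $k > 2$ I would invoke \Cref{lem:simplekp}, which gives the $\frac{2n}{k-1} + \frac{3k-2-\frac{9}{k}}{1-\frac{1}{k}}$ bound. Assembling these three cases into a single statement completes the proof; no new construction or counting argument is needed at this level, as the mathematical content lives entirely in the three lemmas.

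The one point worth making explicit is that all three lemmas arise from the same underlying family $S_{k,t}$ (followed by the appropriate stashing step), so that the resulting drawings are genuinely arbitrarily large and share the claimed properties of being saturated, simple, and $k$-plane. This is guaranteed by \Cref{lem:kpsaturated}, which shows $S_{k,t}$ is saturated, simple, and $k$-plane for every $k > 0$ and $t \geq 0$, together with the size computations in \Cref{lem:sktsize}. Since $t$ may be chosen arbitrarily large in each lemma, every case produces drawings on arbitrarily many vertices, matching the ``arbitrarily large'' requirement of the theorem.

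I do not expect a genuine obstacle at the level of the theorem itself. If there is any subtlety, it is in confirming that the three edge–vertex expressions are stated consistently with the lemmas \emph{after} their respective stashing operations, which add vertices differently in each regime (degree-two vertices stashed into two-vertex cells for $k=1$, pendant vertices for $k=2$, and isolated vertices into free cells for $k>2$). The only real care is therefore to match the parameter $n$ appearing in each lemma with the $n$ in the theorem statement, which is immediate since each lemma already expresses its edge count as a function of $n$.
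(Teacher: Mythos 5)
Your proposal matches the paper exactly: the theorem is proved there by combining \Cref{lem:simple1p}, \Cref{lem:simple2p}, and \Cref{lem:simplekp}, one for each case of $k$, with all the constructive content residing in those lemmas (and in \Cref{lem:kpsaturated} and \Cref{lem:sktsize} behind them). No further comment is needed.
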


	\aclearpage

\subsection{Other drawings with few edges}
	\label{sec:lsimple}
	
	The following result completes the picture for saturated 3- and 4-simple 3- and 4-plane drawings. 
	The proof is based on the constructions depicted in \Cref{fig:34s34p}. 
	
		In this section we give the formal proofs for the constructions shown in \Cref{sec:lsimple}.
	All three are applications of \Cref{lem:stashing}.
	\begin{lemma}
		\label{lem:3s3p}
		There are arbitrarily large saturated $3$-simple $3$-plane drawings with $m = \frac{3(n - 1)}{4}$ edges.
	\end{lemma}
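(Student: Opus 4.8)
The plan is to reuse the template of \Cref{lem:2s2p3p}: exhibit one fixed saturated $3$-simple $3$-plane base drawing $D_0$, record its numbers of vertices $n_0$, edges $m_0$, and free cells $f_0$, and then apply \Cref{lem:stashing}. That lemma turns $D_0$ into arbitrarily large saturated $3$-plane drawings on $n$ vertices with $\frac{m_0}{n_0+f_0-1}(n-1)$ edges, so it suffices to design $D_0$ with $\frac{m_0}{n_0+f_0-1}=\frac34$, equivalently $4m_0=3(n_0+f_0-1)$, and to ensure that the stashing step preserves $3$-simplicity.

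First I would take the base drawing $D_0$ from \Cref{fig:34s34p}. Compared with the $2$-simple gadget of \Cref{lem:2s2p3p}, the extra budget of $3$-simplicity allows a pair of non-adjacent edges to cross three times (and an adjacent pair to cross twice in addition to their shared endpoint), and I expect the figure to route a small number of edges so as to realise exactly such triple crossings. I would then read off $(n_0,m_0,f_0)$ and verify the numeric identity $4m_0=3(n_0+f_0-1)$.

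Next I would check the three defining properties of $D_0$. For $3$-simplicity, confirm that each pair of edges meets in at most three points counting endpoints; since no edge self-intersects in this section, this reduces to bounding the crossings of each pair. For $3$-planarity, confirm that every edge carries at most three crossings, and in particular that the edges bounding the free cells carry exactly three, so that these cells are free in the sense of \Cref{sec:norestrictions} and can legitimately host stashed isolated vertices. For saturation, argue that each non-adjacent pair of vertices is enclosed by a cycle of saturated edge segments, so any new edge would force a fourth crossing on a saturated edge; this same enclosure is what keeps stashed isolated vertices isolated. Substituting $(n_0,m_0,f_0)$ into \Cref{lem:stashing} then yields $m=\frac{3(n-1)}4$.

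The counting and the appeal to \Cref{lem:stashing} are routine; the main obstacle is the design and correctness of $D_0$. Here one must meet the $3$-simplicity cap and the $3$-plane cap simultaneously while still forcing enough edges to be saturated to bound the free cells and, at the same time, keeping every non-adjacent vertex pair separated so that saturation holds. I would be especially careful with the free-cell count $f_0$: the ratio is so sensitive that an off-by-one in $f_0$ already destroys the equality with $\frac34$, so the enumeration of cells produced by the crossing arrangement must be exact. Finally I would confirm that stashing creates no fourth shared point between an inserted piece and a boundary edge, which is what guarantees that the enlarged drawings remain $3$-simple.
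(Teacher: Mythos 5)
Your proposal follows exactly the paper's route: exhibit a single base gadget, verify it is saturated, $3$-simple, and $3$-plane, count $(n_0,m_0,f_0)$, and invoke \Cref{lem:stashing}; the paper's gadget is a $K_4$ in which every edge is crossed three times (twice by adjacent edges, once by its opposite edge), giving $n_0=4$, $m_0=6$, $f_0=5$ and hence the ratio $\frac{6}{4+5-1}=\frac34$. This matches your identity $4m_0=3(n_0+f_0-1)$, so the proposal is correct and essentially identical to the paper's proof.
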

	\begin{proof}
		Consider the drawing shown in \Cref{fig:3s3p}.
		It is a $K_4$ drawn with an idea similar to the one used in \Cref{lem:2s2p3p}.
		Each edge is crossed three times.
		Two crossings are with adjacent edges and one with the independent edge relative to that edge.
		This creates five free cells.
		Applying \Cref{lem:stashing} with $n_0 = 4$, $f_0 = 5$, and $m_0 = 6$ gives the result.
	\end{proof}
	
	\begin{figure}[b]
		\centering
		\savebox{\smallereximagebox}{\includegraphics[page=1]{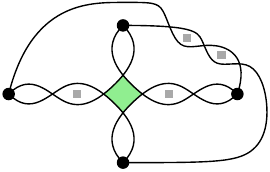}}
		\begin{minipage}[t]{.325\textwidth}
			\centering
\usebox{\smallereximagebox}
			\subcaption{$3$-simple $3$-plane: $\frac{3(n-1)}{4}$}
			\label{fig:3s3p}
		\end{minipage}
		\hfill
		\begin{minipage}[t]{.325\textwidth}
			\centering
			\raisebox{\dimexpr.5\ht\smallereximagebox-.5\height}{
				\includegraphics[page=2]{3_simple}
			}
			\subcaption{$3$-simple $4$-plane: $\frac{3(n-1)}{7}$}
			\label{fig:3s4p}
		\end{minipage}
		\hfill
		\begin{minipage}[t]{.325\textwidth}
			\centering
			\raisebox{\dimexpr.5\ht\smallereximagebox-.5\height}{
				\includegraphics[page=1]{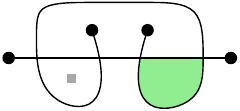}
			}
			\subcaption{$4$-simple $4$-plane: $\frac{2(n-1)}{5}$}
			\label{fig:4s4p}
		\end{minipage}	
		\caption{Constructions for \Cref{thm:smaller}.
			Gray squares represent isolated vertices that we stash and green cells are used for recursively stashing the construction.}
		\label{fig:34s34p}
	\end{figure}
	
	\begin{lemma}
		\label{lem:3s4p}		
		There are arbitrarily large saturated $3$-simple $4$-plane drawings with $m = \frac{3(n-1)}{7}$ edges.
	\end{lemma}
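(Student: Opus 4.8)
The plan is to mirror the proof of \Cref{lem:3s3p}: exhibit the single $K_4$ drawing depicted in \Cref{fig:3s4p} as the seed $D_0$ and then invoke \Cref{lem:stashing}. Since $K_4$ is complete, every pair of vertices is already adjacent, so saturation reduces entirely to the claim that no additional crossing can be introduced. Once we verify that each of the six edges is crossed exactly four times, each edge is saturated, and hence no edge can be rerouted to gain a crossing; trivially, no new edge can be added to a complete graph. Thus the only real work is (i) confirming that the drawing is $3$-simple and $4$-plane, and (ii) counting its free cells.

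For (i) I would describe the crossing pattern exactly as in \Cref{lem:3s3p}, now raising the count from three to four crossings per edge. Each edge meets its unique independent partner and its four adjacent edges; to remain $3$-simple, two independent edges may share up to three points and two adjacent edges up to three points (one common endpoint plus two crossings). The natural ``$+1$'' extension of the $3$-plane pattern is to let every edge pick up two crossings with its independent partner and two with adjacent edges, so that the total is $6\cdot 4 / 2 = 12$ crossings while no edge pair exceeds the $3$-simple bound. Because all six edges then carry exactly four crossings, the drawing is $4$-plane and every edge is saturated.

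For (ii) I would read off the cells of the arrangement that have no vertex on their boundary; these are automatically bounded only by saturated edge segments, since every edge is saturated. The target is $f_0 = 11$. An Euler-formula sanity check supports this: the planarization has $4 + 12 = 16$ nodes and $6 + 2\cdot 12 = 30$ segments, hence $16$ faces in total and $15$ bounded ones, and a drawing in which the outer face and exactly four bounded faces are incident to the four vertices (accounting for all $4\cdot 3 = 12$ vertex--face incidences) leaves precisely $11$ free cells. With $n_0 = 4$, $m_0 = 6$, and $f_0 = 11$, \Cref{lem:stashing} yields arbitrarily large saturated $3$-simple $4$-plane drawings on $n$ vertices with
\begin{eqnarray*}
	m &=& \frac{m_0}{n_0 + f_0 - 1}(n-1) = \frac{6}{4 + 11 - 1}(n-1) = \frac{3(n-1)}{7}
\end{eqnarray*}
edges, as claimed.

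The main obstacle is producing a concrete realizable drawing that simultaneously achieves four crossings on every edge, respects $3$-simplicity on all fifteen edge pairs, and opens up the full eleven free cells. The figure supplies such a drawing, so in the write-up the delicate part is the careful face-by-face bookkeeping that certifies exactly eleven free cells rather than an off-by-one miscount; everything else follows from \Cref{lem:stashing} as in \Cref{lem:3s3p}.
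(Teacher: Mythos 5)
Your approach diverges from the paper's, and it contains a genuine gap. The paper does not use a $K_4$ here at all: its seed drawing for \Cref{lem:3s4p} is a \emph{path} on four vertices $u,v,w,x$ with only three edges, drawn so that $uv$ and $wx$ each cross $vw$ twice and cross each other twice. Every edge then has exactly $4$ crossings, adjacent pairs share one endpoint plus two crossings ($3$ points) and the independent pair shares two crossings, so the drawing is $3$-simple and $4$-plane with every edge saturated; it has $4$ free cells, and \Cref{lem:stashing} with $n_0=4$, $m_0=3$, $f_0=4$ gives $\frac{3}{4+4-1}(n-1)=\frac{3(n-1)}{7}$. Your appeal to ``the figure supplies such a drawing'' is therefore mistaken: \Cref{fig:3s4p} depicts this path, not a $K_4$.

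The gap in your argument is exactly the point you yourself flag as ``the main obstacle'' and then do not close: you never exhibit a $3$-simple drawing of $K_4$ in which every edge has four crossings and which has eleven free cells. The Euler-formula computation only checks that the face count $16$ is \emph{consistent} with a hypothetical incidence pattern (outer face plus four bounded faces carrying all twelve vertex--face incidences); it does not show that a drawing realizing that pattern, the prescribed crossing distribution (two crossings per independent pair, six more crossings spread over the twelve adjacent pairs), and $3$-simplicity on all fifteen pairs actually exists. Without a concrete realization, $f_0=11$ is unsubstantiated and \Cref{lem:stashing} cannot be invoked. Arithmetically your target $\frac{6}{4+11-1}=\frac{3}{7}$ would match the claim, so the plan is not absurd, but as written the proof rests on an unverified existence claim; the paper's three-edge path avoids this entirely with a drawing that is easy to verify by hand.
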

	\begin{proof}
		Consider the drawing shown in \Cref{fig:3s4p}.
		It is a path on three vertices $u$, $v$, $w$, and $x$ drawn such that
		$uv$ and $wx$ each cross the edge $vw$ and each other twice as shown in \Cref{fig:3s4p}.
		This creates four free cells.
		Applying \Cref{lem:stashing} with $n_0 = 4$, $f_0 = 4$, and $m_0 = 3$ gives the result.
	\end{proof}
	
	\begin{lemma}
		\label{lem:4s4p}		
		There are arbitrarily large saturated $4$-simple $4$-plane drawings with $m = \frac{2(n-1)}{5}$ edges.
	\end{lemma}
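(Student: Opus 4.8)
The plan is to follow the template established in \Cref{lem:3s3p,lem:3s4p}: exhibit the concrete base drawing $D_0$ shown in \Cref{fig:4s4p}, verify by hand that it is $4$-simple, $4$-plane, and saturated, count its free cells, and then appeal to \Cref{lem:stashing}. First I would read off from the figure the underlying graph together with its parameters $n_0$ and $m_0$, and the full crossing pattern. To certify $4$-planarity I would check edge by edge that every edge is crossed exactly four times; to certify $4$-simplicity I would check that any two edges share at most four points, counting a shared endpoint of two adjacent edges among those four points.

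Since every edge is already crossed exactly $k=4$ times, each edge is individually saturated, so the remaining content of saturation is that no new edge can be inserted. I would establish this by the same separation argument used in \Cref{lem:3s3p,lem:3s4p}: every pair of non-adjacent vertices lies in distinct cells separated by segments of saturated edges, so any curve joining them would have to cross some edge a fifth time, violating $4$-planarity. Next I would count the free cells $f_0$, using that each pair of edges crossing four times bounds three lens-shaped regions whose boundary consists solely of saturated edge segments and carries no vertex, while taking care to exclude the cells that contain the vertices of $D_0$ and not to double-count regions where the different crossing families meet.

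With $n_0$, $m_0$, and $f_0$ in hand, applying \Cref{lem:stashing} yields arbitrarily large saturated $4$-simple $4$-plane drawings on $n$ vertices with $m=\frac{m_0}{n_0+f_0-1}(n-1)$ edges, and it remains only to check the arithmetic identity $\frac{m_0}{n_0+f_0-1}=\frac{2}{5}$ for the values read off from \Cref{fig:4s4p}. I expect the free-cell count to be the delicate step: the target ratio $\frac{2}{5}$ is sensitive to $f_0$, so one must be careful both to discard every cell incident to a vertex and to correctly account for cells lying in the overlap of the two crossing regions, exactly the kind of bookkeeping that distinguishes this bound from the weaker ratio one would obtain from a naive cell count.
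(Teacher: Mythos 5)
Your plan is exactly the paper's proof: the base drawing in \Cref{fig:4s4p} is just two independent edges crossing each other four times, with the endpoints of one edge placed together in one of the three interior lens cells, so that $n_0=4$, $m_0=2$, $f_0=2$ and \Cref{lem:stashing} gives the ratio $\frac{2}{4+2-1}=\frac{2}{5}$. The only caveat is that your outline defers the concrete values to the figure; once they are read off, the bookkeeping you flag as delicate is in fact immediate here since there is only a single crossing pair.
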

	\begin{proof}
		Consider the drawing shown in \Cref{fig:4s4p}.
		It consists of two independent edges $uv$ and $wx$ with vertices $u$, $v$, $w$, and $x$.
		The drawing is such that the two edges cross each other four times and the vertices $w$ and $x$
		are placed in one cell bounded by the two edges.
		This leaves two free cells.
		Applying \Cref{lem:stashing} with $n_0 = 4$, $f_0 = 2$, and $m_0 = 2$ gives the result.
	\end{proof}
	
	\Cref{lem:3s3p,lem:3s4p,lem:4s4p} proof \Cref{thm:smaller}.
	
	\begin{restatable}{theorem}{thmsmaller}
		\label{thm:smaller}
	There are arbitrarily large saturated $3$-simple $3$- and $4$-plane drawings on $n$ vertices with $\frac{3(n-1)}{4}$ 
	and $\frac{3(n-1)}{7}$ edges, respectively,  
and also arbitrarily large 
	 saturated $4$-simple $4$-plane 
	 drawings on $n$ vertices with $\frac{2(n-1)}{5}$ edges.
	\end{restatable}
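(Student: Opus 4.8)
The plan is to prove the theorem by exhibiting three explicit base drawings, verifying their saturation, simplicity, and planarity properties by hand, and then invoking \Cref{lem:stashing} to produce arbitrarily large families. Since each of the three claims is a separate construction, I would split the argument into three lemmas (matching \Cref{lem:3s3p,lem:3s4p,lem:4s4p}) and combine them at the end. For each base drawing $D_0$ the recipe is identical: I first describe the drawing and count its vertices $n_0$, edges $m_0$, and free cells $f_0$; then I confirm it is $\ell$-simple and $k$-plane with every edge saturated (crossed exactly $k$ times), so that no edge can be added through a saturated edge; then I check that no two non-adjacent vertices share a cell, so no edge can be added within a single cell either; this establishes saturation. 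Finally, plugging $n_0,m_0,f_0$ into \Cref{lem:stashing} yields the stated edge-vertex ratio $\frac{m_0}{n_0+f_0-1}(n-1)$.

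Concretely, for the $3$-simple $3$-plane bound I would draw $K_4$ so that each of its six edges is crossed exactly three times---twice by the two adjacent edges and once by the unique independent edge---mirroring the circular layout of \Cref{lem:2s2p3p}; a direct check gives $n_0=4$, $m_0=6$, and $f_0=5$ free cells, so $\frac{6}{4+5-1}=\frac{6}{8}=\frac{3}{4}$. For the $3$-simple $4$-plane bound I would take a path on four vertices $u,v,w,x$ and route $uv$ and $wx$ so that each crosses $vw$ twice and they cross each other twice; this makes each edge $4$-crossed and $3$-simple, with $n_0=4$, $m_0=3$, and $f_0=4$, giving $\frac{3}{4+4-1}=\frac{3}{7}$. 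For the $4$-simple $4$-plane bound I would use two independent edges $uv$ and $wx$ crossing four times, with $w$ and $x$ enclosed in one of the bounded cells so that $uv,wx$ remain mutually inaccessible; here $n_0=4$, $m_0=2$, and $f_0=2$, yielding $\frac{2}{4+2-1}=\frac{2}{5}$.

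I expect the main obstacle to be the \emph{geometric realizability and saturation check} for each base drawing rather than any of the arithmetic. Specifically, I must verify for each figure that the prescribed crossing pattern can actually be drawn without self-intersections and without exceeding the claimed simplicity (e.g.\ that two edges sharing four crossings in the $4$-simple case do not accidentally force a fifth shared point, and that in the $K_4$ drawing no pair of edges crosses more than three times so that $3$-simplicity holds), and that after placing the designated vertices the remaining cells genuinely separate every non-adjacent pair. These are finite case checks best discharged by appeal to the explicit drawings in \Cref{fig:34s34p}, but they are where the real content lies; once they are granted, the free-cell counts and the application of \Cref{lem:stashing} are routine. Combining the three lemmas then gives \Cref{thm:smaller}.
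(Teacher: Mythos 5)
Your proposal matches the paper's proof essentially verbatim: the same three base drawings ($K_4$ with each edge thrice-crossed; a path $u,v,w,x$ with $uv$ and $wx$ each crossing $vw$ and each other twice; two independent edges crossing four times), the same parameters $(n_0,m_0,f_0)=(4,6,5),(4,3,4),(4,2,2)$, and the same application of \Cref{lem:stashing} via three lemmas corresponding to \Cref{lem:3s3p,lem:3s4p,lem:4s4p}. The paper likewise discharges the realizability and saturation checks by reference to the explicit drawings in \Cref{fig:34s34p}, so no further comment is needed.
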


	\section{Straight-Line Drawings}
	\label{sec:straightline}
		Finally, we consider $k$-plane drawings in which edges are drawn as straight-lines. 
We construct a family of saturated straight-line $1$-plane drawings with $n$ vertices and $\frac{11n - 12}{5}$ edges by 
		gluing $K_4$s together 
		and placing three vertices of degree two for each $K_4$, 
		as shown in \Cref{fig:1psl}.
Note that this edge-vertex ratio of $\frac{11}{5} = 2.2$ is lower than 
		the lower bound on the edge-vertex ratio for 
		saturated simple $1$-plane drawings, 
		that is $\frac{20}{9}\approx 2.22$~\cite{baratImprovementsDensity2018}.
For $k = 2$, $3$, and $4$ we take a 
		convex $4k$-gon and 
		by adding $2k$ chords as shown in \Cref{fig:3psl} for $k=3$
		we obtain a grid of free cells.
For $k > 4$ we contract two neighboring groups into one vertex each as shown in \Cref{fig:kpsl}.
		This creates two fans of $k$ edges that form a grid of free cells.
Stashing into the free cells yields a family of drawings on $n$ vertices obtaining the desired bounds.
		For $k=3$ the above construction 
gives an edge-vertex ratio of $\frac{6}{5}$, 
		but if instead of stashing isolated vertices we stash isolated edges 
		we can improve the ratio to $\frac{7}{6}$; see \Cref{fig:3psl}. 
		
For $k=1$, $2$, and $3$ and up to eight vertices 
		we tested all possible saturated straight-line $k$-plane drawings 
		using the order type database~\cite{aak-eotsp-01a}, 
		confirming for these small values that our constructions are the best possible.

		\begin{figure}[b]
			\centering
			\savebox{\straightlineimagebox}{\includegraphics[page=2]{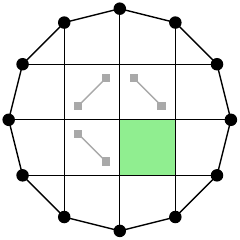}}
			\begin{minipage}[t]{.32\textwidth}
				\centering
				\raisebox{\dimexpr.5\ht\straightlineimagebox-.5\height}{			
					\includegraphics[page=3]{straight_line}
				}
				\subcaption{Construction for $k=1$}
				\label{fig:1psl}
			\end{minipage}
			\begin{minipage}[t]{.32\textwidth}
				\centering
\usebox{\straightlineimagebox}
				\subcaption{Construction for $k>1$}
				\label{fig:kpsl}
			\end{minipage}
			\begin{minipage}[t]{.32\textwidth}
				\centering
				\raisebox{\dimexpr.5\ht\straightlineimagebox-.5\height}{			
					\includegraphics[page=1]{straight_line}	
				}
				\subcaption{Construction for $k=3$}
				\label{fig:3psl}
			\end{minipage}		
			\caption{Constructions used in \Cref{thm:kpsl}. Gray squares represent vertices that we stash
				and green cells are used for recursively stashing the construction.}
			\label{fig:straightline}
		\end{figure}	
			
		We begin by considering the case of $k=1$ in the following lemma and
		then give the proof of \Cref{thm:kpsl}.
		\begin{lemma}
			\label{lem:1psl}
			There are arbitrarily large drawings on $n$ vertices with $\frac{11n - 12}{5}$ edges,
			which are saturated, straight-line, and $1$-plane
		\end{lemma}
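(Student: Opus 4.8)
The plan is to build the family by chaining together convexly drawn copies of $K_4$ and then stashing degree-two vertices, which replaces the isolated-vertex stashing of \Cref{lem:stashing} that is unavailable for $k=1$. First I would fix the base gadget: draw $K_4$ with its four vertices in convex position so that the two diagonals cross exactly once and the four hull edges are uncrossed. This drawing is straight-line and $1$-plane, and the two diagonals are saturated, being crossed once each. Around the crossing point the two diagonals cut out cells bounded by saturated edge segments; these are the cells into which the degree-two vertices will be placed, as in \Cref{fig:1psl}. Since $K_4$ is complete, a single gadget is already saturated.

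Next I would glue $t$ copies of this gadget into a strip by identifying hull edges of consecutive copies, positioning the copies so that the whole drawing stays straight-line and no two edges from different gadgets cross. The only non-adjacencies in the glued core are then between vertices of different copies, and I would arrange the geometry so that any straight segment joining two such vertices must cross a diagonal of some intervening copy, which is already saturated; hence no core edge can be added.

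Then, for each $K_4$ I would stash three vertices of degree two, each placed in a cell bounded by saturated diagonal segments and joined by two uncrossed segments to the two gadget vertices on that cell's boundary. The crucial point, and the main obstacle, is the saturation argument in the straight-line setting: since edges cannot be rerouted, I must verify geometrically that every such degree-two vertex is \emph{trapped}, i.e.\ that any segment from it to a vertex other than its two neighbors necessarily crosses one of the saturated diagonals a second time, and likewise that two stashed vertices cannot be joined. Establishing this, together with checking that the gluing never forces a hull edge to be crossed twice (so the drawing stays $1$-plane and saturated globally), is where the geometric care lies.

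Finally, a direct count finishes the proof. Each glued copy contributes a fixed number of new core vertices and edges together with its three degree-two vertices and their six incident edges; asymptotically this is $5$ new vertices and $11$ new edges per copy, so $n = 5t + O(1)$ and $m = 11t + O(1)$. Solving for $m$ in terms of $n$ yields $m = \frac{11n-12}{5}$, the additive constant being a boundary effect from the two ends of the chain, as claimed.
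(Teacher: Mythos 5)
Your proposal is correct and follows essentially the same route as the paper: the paper also chains $K_4$s sharing hull edges (realized as $2t+2$ points on an ellipse) and stashes three degree-two vertices per $K_4$ into the cells around each crossing, arriving at $n=5t+2$, $m=11t+2$ and hence $m=\frac{11n-12}{5}$. The geometric saturation check you flag as the main obstacle is exactly the argument the paper makes (any added segment must cross an already-crossed diagonal of an intervening copy).
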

		\begin{proof}
			Consider the construction shown in \Cref{fig:1psl}.
			Let $t > 1$ be an integer and place $2t + 2$ vertices on an ellipse 
			as shown in \Cref{fig:1psl}.
			We add the following edges.
			Split the vertices into an upper and lower set and let $u_i$ 
			bet the vertices in the upper and $v_i$ be the vertices in the lower set.
			We assume the vertices in either set to be labeled from zero to one as we traverse
			the upper or lower arc of the ellipse.
			Then we add the edges $u_iu_{i+1}$, $v_iv_{i+1}$,$u_iv_i$, $u_iv_{i+1}$, and $v_iu_{i+1}$ for all $0 \leq i \leq t$.
			The resulting drawing $D_t$ is clearly straight-line and $1$-plane.
			Note that for all $i$ the four vertices $u_i$,$u_{i+1}$,$v_{i+1}$, and $v_i$ form a $K_4$
			with one crossing between the edges $u_iv_{i+1}$ and $v_iu_{i+1}$.
			Then, $D_t$ is saturated as adding any edge $u_iv_{i+j}$ with $j > 1$
			has to cross the edge $u_{i+1}v_{i}$ which is already crossed by the edge $u_iv_{i+1}$.
			Moreover, we cannot add any edge $u_iu_{i+j}$ for $j>1$ as it would again cross the edge $u_{i+1}v_i$.
			The same holds for $j < 1$ using edge $u_{i-1}v_i$ and for $v_iu_{i+j}$ and $v_iv_{i+j}$.
			
			To obtain the bound we add vertices of degree two to $D_t$.		
			More precisely, we add for each $K_4$ three vertices of degree two,
			namely, if $c$ is the crossing point of the considered $K_4$ 
			in the cells bounded by $u_i$, $u_{i+1}$, and $c$,
			$u_{i+1}$,$v_{i+1}$, and $c$, and $v_i$,$v_{i+1}$, and $c$.
			Then $D_t$ has $n = 2t + 2 + 3t = 5t + 2$ vertices and $m = 5t + 1 + 6t = 11t + 2$ edges.
			With $t = \frac{n-2}{5}$ this yields $m = 11\frac{n-2}{5} + 2 = \frac{11n - 12}{5}$ for the number of edges.
		\end{proof}	
			
		\begin{restatable}{theorem}{thmkpsl}
			\label{thm:kpsl}
			For every $k>0$ there are arbitrarily large drawings on $n$ vertices with $\frac{11n - 12}{5}$ if $k = 1$,
			$\frac{3(n-1)}{2}$ if $k=2$,
			$\frac{7(n-1)}{6}$ if $k=3$,
			$n-1$ if $k=4$, and 
			$\frac{4k+2}{k^2 + 2}(n-1)$ if $k > 4$ edges,
			which are saturated, straight-line, and $k$-plane. 
		\end{restatable}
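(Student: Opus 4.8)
The plan is to treat the five ranges of $k$ separately, in each case exhibiting one small base drawing $D_0$ and then inflating it with the stashing machinery of \Cref{sec:stashing}. The case $k=1$ is already settled by \Cref{lem:1psl}, so I would only need to dispatch the remaining ranges. For each I would produce $D_0$, verify that it is straight-line, $k$-plane and saturated, record its numbers of vertices $n_0$, edges $m_0$ and free cells $f_0$, and feed these into \Cref{lem:stashing} (or, where profitable, \Cref{lem:stashingedge}).

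For $k\in\{2,3,4\}$ the base drawing is the convex $4k$-gon with the $2k$ chords of \Cref{fig:3psl}, giving $n_0=4k$ vertices and $m_0=6k$ edges (the $4k$ hull sides together with the $2k$ chords). The chords split into two families of $k$ chords each, meeting in a $k\times k$ grid of $k^2$ crossings; hence every chord is crossed exactly $k$ times and is saturated, while the hull sides remain uncrossed. To obtain saturation I would argue that the straight segment between any two non-adjacent vertices must cross one of these $k$-times-crossed chords and therefore cannot be inserted. The free cells are exactly the $(k-1)^2$ bounded quadrilaterals of the grid, so $f_0=(k-1)^2$, and \Cref{lem:stashing} gives
\[
\frac{m_0}{n_0+f_0-1}=\frac{6k}{4k+(k-1)^2-1}=\frac{6}{k+2},
\]
which equals $\tfrac32$, $\tfrac65$ and $1$ for $k=2,3,4$. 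This already yields the stated bounds for $k=2$ and $k=4$. For $k=3$ the ratio $\tfrac65$ exceeds one, so by the second part of \Cref{lem:stashingedge} stashing isolated edges instead of isolated vertices strictly lowers it to $\frac{m_0+f_0-1}{n_0+2f_0-2}=\frac{21}{18}=\frac76$, as claimed.

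For $k>4$ I would instead contract two neighbouring groups of $k$ consecutive hull vertices of the $4k$-gon into single vertices, as in \Cref{fig:kpsl}. This deletes the $2(k-1)$ hull edges internal to the two groups and turns the $2k$ chords into two fans of $k$ edges emanating from the two contracted vertices, leaving $n_0=2k+2$ vertices and $m_0=4k+2$ edges. The two fans still meet in the same $k\times k$ grid, so once more $f_0=(k-1)^2$, and the verifications of $k$-planarity and saturation are identical to the previous case. \Cref{lem:stashing} then yields
\[
\frac{m_0}{n_0+f_0-1}=\frac{4k+2}{2k+2+(k-1)^2-1}=\frac{4k+2}{k^2+2},
\]
as required; this coincides with the $4k$-gon value $1$ at $k=4$ and is strictly smaller for every $k>4$, which is exactly why the construction is switched there.

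The step I expect to be the main obstacle is the saturation check for straight-line drawings. In the curved constructions of the earlier sections one is free to route a blocking curve, whereas here every candidate edge is a fixed straight segment, so I must verify geometrically that for each non-adjacent pair the connecting segment is forced to cross a chord (or fan edge) that is already crossed $k$ times. The second delicate point is pinning the free-cell count to exactly $(k-1)^2$ in both grids, since an off-by-one there would propagate directly into the final edge-vertex ratio.
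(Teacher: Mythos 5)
Your proposal follows essentially the same route as the paper: the $k=1$ case is delegated to \Cref{lem:1psl}, the convex $4k$-gon with two families of $k$ chords (with $n_0=4k$, $m_0=6k$, $f_0=(k-1)^2$) handles $k=2,3,4$ via \Cref{lem:stashing}, the $k=3$ ratio is improved to $\frac{7}{6}$ by switching to \Cref{lem:stashingedge}, and the contraction of two neighbouring groups into single fan apices (with $n_0=2k+2$, $m_0=4k+2$) gives the $\frac{4k+2}{k^2+2}$ bound for $k>4$. All the constructions, parameter counts, and resulting ratios coincide with those in the paper's proof, and the saturation argument you flag as the delicate step is resolved there exactly as you anticipate, by observing that any candidate segment must cross a saturated chord.
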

		\begin{proof}
			The case of $k=1$ is proven in \Cref{lem:1psl}.
			Next, we show how to derive a bound for $k > 1$.
			Let $s = 4k$.
			Take a cycle $C_s$ on $s$ vertices and divide the vertices into four equally large sets $A_1$,$A_2$ and $B_1$,$B_2$,
			such that their vertices are consecutive in $C_t$.
			Furthermore, we choose the sets such that as we traverse the cycle starting from the first vertex of $A_1$ 
			we encounter first all vertices of $A_1$, then of $B_1$, followed by vertices in $A_2$ and finally the ones in $B_2$.
			We label the vertices such that we encounter them in each group from index $1$ to $t$.
			Then, we draw $C_s$ into the plane such that the drawing is plane and 
			the vertices lie all on a unit circle.
			Finally, add all edges, we call them \emph{chords} below,
			$a_i^1a_i^2$ with $a^1_i \in A_1$ and $a_i^2 \in A_2$ and all edges
			$b_i^1b_i^2$ with $b^1_i \in B_1$ and $b_i^2 \in B_2$; see \Cref{fig:kpsl}.
			Let $D_0$ be the resulting drawing.
			Clearly, $D_0$ is $k$-plane.
			It is also saturated: No edge between vertices in the same group can be added as 
			it would require that there exists another vertex in the same group between them,
			but this vertex has a neighbor in the group opposite of the considered one.
			In the same way, no edge can be added between vertices in different groups without crossing a chord.
			
			The crossings of the chords in $D_0$ create a grid-like set of free cells in the center of the circle; compare also \Cref{fig:kpsl}.
			This grid has $(k-1)^2$ many cells and all of them are free as they are only bounded by edge segments of the chords.
			Using one free cell to stash the whole construction we obtain with \Cref{lem:stashing} and
			$n_0 = 4k$, $f_0 = (k-1)^2$, and $m_0 = 4k + 2k = 6k$ that
			\begin{eqnarray*}
				m = \frac{6k(n-1)}{4k + (k-1)^2 - 1} = \frac{6k(n-1)}{4k + k^2 - 2k} = \frac{6k(n-1)}{k^2 + 2k} = \frac{6(n-1)}{k + 2}.
			\end{eqnarray*}
			
			For $k = 2$ and $k = 4$ this yields the claimed bounds.
			However, for $k = 3$ we find that the resultant bound would be $\frac{6(n-1)}{3 + 2}= \frac{6(n-1)}{5}$.
			Hence, by \Cref{lem:stashingedge} we can obtain a better edge-vertex ratio by stashing new edges
			instead of isolated vertices; see \Cref{fig:3psl} for an illustration.
			Doing so, we obtain by \Cref{lem:stashingedge} and with 
			$n_0 = 4\cdot 3 = 12$, $f_0 = (3-1)^2 = 4$, and $m_0 = 6 \cdot 3  = 18$
			that for $k=3$ there are saturated straigth-line $3$-plane drawings on $n$ vertices with
			\begin{eqnarray*}
				\frac{(18 + 4 - 1)(n-1)}{12 + 2\cdot 4 - 2} = \frac{21(n-1)}{18} = \frac{7(n-1)}{6}
			\end{eqnarray*}
			edges as desired.
			
			To obtain the bound for $k > 4$ we modify the construction by contracting the vertices in $A_1$ and $B_1$ 
			into one vertex each.
			Let $a_1$ and $b_1$ be the resulting vertices and $D'_0$ the modified drawing.
			The vertices $a_1$ and $b_1$ are incident to $k$ edges each that were previously inserted between
			vertices in $A_1$ and $A_2$ and $B_1$ and $B_2$.
			Since we did not change the groups $A_2$ and $B_2$ these edges still cross and 
			hence there are also $(k-1)^2$ free cells in $D'_0$.
			With the number of vertices as $n'_0 = 2k + 2$ and 
			the number of edges in $D'_0$ as $m'_0 = 2k + 2 + 2k = 4k + 2$ we obtain
			that there are arbitrary large saturated straight-line $k$-plane drawings with 
			\begin{eqnarray*}
				\frac{(4k+2)(n-1)}{2k + 2 + (k-1)^2 - 1} = \frac{(4k+2)(n-1)}{2k + 2+ k^2 - 2k} 
				= \frac{4k+2}{k^2 + 2}(n-1)
			\end{eqnarray*}
			edges as desired.
		\end{proof}	

\section{Saturated $k$-Plane Drawings of Matchings}
	\label{sec:matchings}
	Throughout this section we consider multi-graphs, i.e.,
	two vertices can be connected with more than one edge and
	we disallow self-intersecting edges.
	To clearly distinguish this setting from the previous sections,
	we call saturated drawings in which inserting parallel edges is allowed
	\emph{multi-saturated}.

	Chaplick et al.~\cite{chaplickEdgeMinimumSaturated20} presented multi-saturated simple $k$-plane drawings
	of arbitrarily large matchings for $k \geq 7$.
	They also rule out the existance of such drawings for any $k\leq 3$.
	In this section we resolve the remaining open cases of $k=4$, $5$, and $6$.
	We prove that for $k=4$ there are no multi-saturated $k$-plane drawings of arbitrarily large matchings,
	regardless of the simplicity, i.e., only self-intersecting edges are not allowed.
	For the case of $k=5$ we show that multi-saturated $k+1$-simple $k$-plane drawings of arbitrarily large matchings exist,
	but no multi-saturated simple $k$-plane drawings.
	Finally, we present a construction for multi-saturated simple $6$-plane drawings.

	\begin{lemma}
		\label{lem:matching4}
		There are no multi-saturated $4$-plane drawings of arbitrarily large matchings without self-intersecting edges.
	\end{lemma}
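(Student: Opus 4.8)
The plan is to show that any such drawing has a bounded number of vertices, so that arbitrarily large examples cannot exist. Suppose for contradiction that $D$ is a multi-saturated $4$-plane drawing of a perfect matching on $n$ vertices with $m = n/2$ edges, no edge self-intersecting, and let $x$ be the number of crossings. As in the lower-bound arguments of \Cref{thm:selfcrossings} and \Cref{lem:lowkpnoselfcro}, I pass to the planarization $\mathcal D = (P,C)$, so that $|P| = n + x$ and $|C| = m + 2x$, and let $\gamma'$ be its number of connected components. Euler's formula in the form derived in \Cref{apx:euler} then yields $f = \gamma' + 1 - n/2 + x$, where $f$ is the number of faces of $\mathcal D$.

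The key new ingredient, exploiting that we draw a matching, is a much stronger lower bound on $f$ than the generic $f \ge \gamma'$ used before. Every vertex has degree at most one, hence is a pendant incident to exactly one face (both sides of the crossing-free initial stub of its edge belong to the same face). Since $D$ is multi-saturated, no two vertices can lie on the boundary of a common face: otherwise we could join them by an arc inside that face with no crossings, inserting a new or parallel edge while preserving $4$-planarity. Therefore the $n$ faces incident to the $n$ vertices are pairwise distinct, so $f \ge n$, and combined with the Euler identity this gives $x \ge \tfrac{3n}{2} - \gamma' - 1$.

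It remains to derive the contradiction from the crossing budget. Since no edge self-intersects and $D$ is $4$-plane, $2x = \sum_e (\text{crossings on } e) \le 4m = 2n$, so $x \le n$; plugging this into the previous inequality forces $\gamma' \ge \tfrac{n}{2} - 1$. Now $\gamma'$ equals the number of components of the crossing graph on the $m = n/2$ matching edges, so $\gamma' \ge m - 1$ means that at most one pair of matching edges crosses at all; as that pair crosses at most $4$ times, we get $x \le 4$. Feeding $x \le 4$ back into $x \ge \tfrac{3n}{2} - \gamma' - 1$ and using $\gamma' \le m = \tfrac{n}{2}$ yields $\tfrac{n}{2} \ge \tfrac{3n}{2} - 5$, i.e.\ $n \le 5$. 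Hence no such drawing can be arbitrarily large. It is worth noting that this argument is tight to $k=4$: for $k=5$ the budget becomes $x \le \tfrac{5n}{4}$ and the same chain no longer bounds $x$ by a constant, matching the fact that saturated matchings do exist for larger $k$.

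The main obstacle is to make the bound $f \ge n$ fully rigorous: one must carefully verify the pendant/single-face claim for degree-one vertices, confirm that multi-saturation (which permits parallel edges) really forbids two vertices from sharing any face even when they are the two endpoints of the same matching edge, and check that faces fail to be simply connected only in ways that do not obstruct the connecting arc. One should also treat imperfect matchings, where isolated vertices contribute additional singleton components to $\gamma'$ and lower the budget but an analogous computation again bounds $n$, and the possibility of several crossings between the same pair of edges, which leaves the component count $\gamma'$ unchanged.
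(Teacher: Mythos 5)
Your proof is correct, and it rests on the same three pillars as the paper's argument: passing to the planarization and applying Euler's formula, the observation that multi-saturation forces every endpoint into its own cell (your $f \ge n$ is exactly the paper's $f \ge 2\sum_i x_i$), and the $4$-planarity crossing budget $2x \le 4m$. Where you genuinely diverge is in the final accounting. The paper trims the pendant stubs, writes $|P|$ and $|C|$ in terms of the per-edge crossing counts $x_0,\dots,x_4$, derives $2x_4+3x_3+4x_2+5x_1+4x_0 \le 4$ for a connected drawing, enumerates the integer solutions, and then argues that the unique surviving configuration (two edges crossing four times, \Cref{fig:matching4}) has no free cell to stash into. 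You instead keep the pendants, invoke the disconnected Euler formula of \Cref{apx:euler}, and squeeze the component count: $x \le n$ forces $\gamma' \ge m-1$, hence at most one crossing pair, hence $x \le 4$, hence $n \le 5$. Your route handles disconnected drawings explicitly inside one inequality chain, where the paper's closing ``stashing is not possible'' step is comparatively informal; the paper's enumeration, in return, pins down the exact structure of the extremal two-edge example. Like the paper, you implicitly take the matching to be perfect ($n=2m$), and the caveats you list at the end (accessibility of a degree-one vertex from its unique incident face, isolated vertices) are the only points that would still need to be written out --- none of them threatens the argument.
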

	\begin{proof}
Let $G = (V,E)$ be a matching on $n$ vertices and $m$ edges,
		$D(G)$ a multi-saturated $4$-plane drawing of $G$ without self-intersecting edges, 
		$f$ the number of cells in $D(G)$ and
		$x_i$ the number of edges that have $i \in \mathbb{N}_0$ crossings in $D(G)$.
		Consider the planarization $\mathcal D = (P,C)$ of $D(G)$.
To simplify the following argumentation we remove each vertex in $V$ and its unique incident edge from $\mathcal D$.
		Note that this does not change the number of cells $f$.
		We know that $|P| = (4x_4 + 3x_3 + 2x_2 + x_1)/2$ and $|C| = 3x_4 + 2x_3 + x_2$.
		With Euler's formula it now follows that 
		\begin{eqnarray*}
			f &=& 3x_4 + 2x_3 + x_2 - (4x_4 + 3x_3 + 2x_2 + x_1)/2 + 2\\
			  &=& x_4 + \frac{1}{2}(x_3 - x_1) + 2.
		\end{eqnarray*}
		Moreover, we know that $f \geq 2 \sum_{i = 0}^4 x_i$ since every vertex in $V$ has to be part of a distinct cell in $D(G)$.
		Consquently, we obtain that
		\begin{eqnarray}
			\label{eq:match4euler}
			2(x_4 + x_3 + x_2 + x_1 + x_0) &\leq& x_4 + \frac{1}{2}(x_3 - x_1) + 2\nonumber\\
			4x_4 + 4x_3 + 4x_2 + 4x_1 + 4x_0 &\leq& 2x_4 + x_3 - x_1 + 4\\
			2x_4 + 3x_3 + 4x_2 + 5x_1 + 4x_0 &\leq& 4.\nonumber
		\end{eqnarray}
		The only solution to this equation that do not result in an empty graph or a graph containing only one edge is
		\begin{eqnarray*}
x_4 &=& 2 \text{ and } x_i = 0 \text{ for } i = 0,1,2,3.
		\end{eqnarray*}
This solution leads to an equality between the left-hand and right-hand side in Equation~\ref{eq:match4euler},
		which implies that the only possible connected multi-saturated $4$-plane drawings of a matching have two edges and
		as many cells as there are endpoints of edges, see Figure~\ref{fig:matching4} for an example.
		Stashing into such a drawing is not possible and hence the lemma follows.
	\end{proof}

	\begin{figure}[t]
		\begin{minipage}[t]{.48\textwidth}
			\centering
			\includegraphics[page=1]{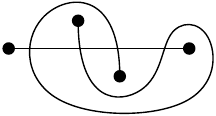}
			\caption{A multi-saturated $4$-plane drawing of a matching with two edges.
			Since no cell is free the drawing cannot be stashed.}
			\label{fig:matching4}
		\end{minipage}
		\hfill
		\begin{minipage}[t]{.48\textwidth}
			\centering		
			\includegraphics[page=1]{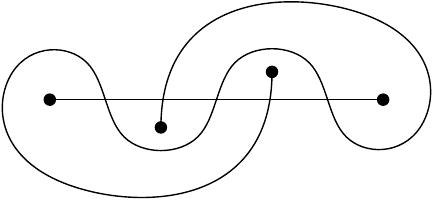}
			\caption{A multi-saturated $5$-plane drawing of a matching with two edges.
			Stashing is possible in the outer cell.}
			\label{fig:matching5twoedges}
		\end{minipage}
	\end{figure}

	\begin{figure}[t]
		\begin{minipage}[t]{.48\textwidth}
			\centering
			\includegraphics[page=2]{figures/matching_5.pdf}
			\caption{A multi-saturated $5$-plane drawing of a matching with three edges.
			Since no cell is free the drawing cannot be stashed.}
			\label{fig:matching5threeedges}
		\end{minipage}
		\hfill
		\begin{minipage}[t]{.48\textwidth}
			\centering
			\includegraphics[page=3]{figures/matching_5.pdf}
			\caption{A multi-saturated $5$-plane drawing of a matching with four edges.
			Since no cell is free the drawing cannot be stashed.}
			\label{fig:matching5fouredges}
		\end{minipage}
	\end{figure}

	\begin{lemma}
		\label{lem:matching5}
		There are no multi-saturated $\ell$-simple $5$-plane drawings of arbitrarily large matchings for $\ell < 6$.
	\end{lemma}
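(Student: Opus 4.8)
The plan is to re-run the planarization and Euler count from the proof of \Cref{lem:matching4}, now keeping all crossing classes $x_0,\dots,x_5$, and then to single out the one configuration that could in principle be stashed. Writing $x_i$ for the number of edges crossed exactly $i$ times, I would planarize $D(G)$, delete each matching vertex together with its unique incident segment (which does not change the cell count $f$), and read off $|P|=\tfrac12(x_1+2x_2+3x_3+4x_4+5x_5)$ and $|C|=x_2+2x_3+3x_4+4x_5$. Euler's formula then gives
\[
f=|C|-|P|+2=x_4+\tfrac32 x_5+\tfrac12(x_3-x_1)+2.
\]
Exactly as in \Cref{lem:matching4}, multi-saturation forces every endpoint into a cell of its own, so $f\ge 2\sum_{i=0}^5 x_i$; combining and clearing denominators yields
\[
4x_0+5x_1+4x_2+3x_3+2x_4+x_5\le 4.
\]

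Second, I would enumerate the admissible solutions. Since $f$ is an integer, $x_1+x_3+x_5$ is even; the coefficient $5$ forces $x_1=0$, and a crossing-free edge spans only a one-edge component that cannot be enlarged, so we may set $x_0=0$. The finitely many remaining solutions are connected components of at most four edges. The three equality solutions $x_5=4$, $(x_4,x_5)=(1,2)$, and $x_4=2$ give $f=2\sum x_i$ and hence leave no free cell; these are precisely the drawings of \Cref{fig:matching5threeedges,fig:matching5fouredges} and the four-times-crossing two-edge gadget. A component whose cells are all occupied cannot be stashed into, and two such components placed in one plane must share their common outer cell, putting two endpoints into one cell and thereby admitting an insertable edge; hence none of these components can be enlarged.

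The sole solution with slack is the two-edge configuration $x_5=2$ of \Cref{fig:matching5twoedges}: two edges crossing exactly five times, leaving one free cell and thus a priori permitting unbounded stashing. Ruling this case out is the crux, and it is exactly where the hypothesis $\ell<6$ must enter, since this is the configuration that does survive once $6$-simplicity is permitted (the companion existence result for $k=5$). Here both edges are saturated, so a parallel copy of an edge cannot cross anything and is blocked if and only if that edge's two endpoints lie in distinct cells; the plan is to analyse the rotation system of the five crossings along the two arcs and to show that, for $\ell<6$, no realization simultaneously separates all four endpoints into distinct cells and leaves a free cell. I expect this last step to be the main obstacle: the Euler bookkeeping and the enumeration are routine variants of \Cref{lem:matching4}, whereas the two-edge five-crossing case is a delicate borderline that must be decided by a case distinction on the cyclic order of the crossings (or on the rotations at the endpoints), and it is precisely the case whose status flips between $\ell=5$ and $\ell=6$.
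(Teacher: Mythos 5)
Your bookkeeping coincides with the paper's proof of this lemma almost line for line: the same planarization with the matching vertices and their pendant segments deleted, the same counts $|P|=\tfrac12(x_1+2x_2+3x_3+4x_4+5x_5)$ and $|C|=x_2+2x_3+3x_4+4x_5$, the same Euler computation, and after clearing denominators the same inequality $x_5+2x_4+3x_3+4x_2+5x_1+4x_0\le 4$. Your dismissal of the equality solutions (no free cell, and two such components placed in one plane would share their common outer cell and put two endpoints together) is also the paper's argument. Two minor slips in the enumeration: your explicit list of equality solutions omits $x_3=x_5=1$, which the paper excludes separately because a five-times-crossed edge whose only partner carries three crossings would have to self-intersect; and for $x_4=2$ the paper gives a sharper reason, namely that the configuration is not even multi-saturated, since an edge with only four crossings can still be crossed once by a new parallel edge routed between the two cells it separates.

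The genuine gap is the case you yourself identify as the crux, $x_5=2$: two edges crossing five times, the only solution with a free cell and hence the only possible source of arbitrarily large matchings. You do not prove anything about it — you only announce a plan to analyse the rotation system of the five crossings — so as written the lemma is not established. The paper closes this case in one sentence, by pointing to its multi-saturated realization (\Cref{fig:matching5twoedges}), asserting that it is $6$-simple, and claiming that two edges with five crossings each can never result in a $5$-simple drawing; whatever one makes of that step, your proposal contains no substitute for it. I would also caution that your framing of where the hypothesis $\ell<6$ enters is off target: for a two-edge component of a matching the number of shared points equals the number of crossings, so the simplicity class of any realization of this solution is not a free parameter that a case distinction on cyclic orders could vary — the hypothesis has to act directly on the number of common points that a multi-saturated realization is forced to have, which is how the paper (tersely) uses it.
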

	\begin{proof}
		The proof follows the same idea as the one for Lemma~\ref{lem:matching4}.
		Let $G = (V,E)$ be a matching, $D(G)$ a multi-saturated $5$-plane drawing of it,
		$f$ the number of cells in $D(G)$, and
		$x_i$ the number of edges that have $i \in \mathbb{N}_0$ crossings in $D(G)$.
		Again, we consider the planarization $\mathcal D = (P,C)$ of $D(G)$ with the vertices from $V$ and their incident edge removed.
		We know that $|P| = (5x_5 + 4x_4 + 3x_3 + 2x_2 + x_1)/2$ and $|C| = 4x_5 + 3x_4 + 2x_3 + x_2$.
		Using Euler's formula in the same manner as above we obtain that
\begin{eqnarray*}
x_5 + 2x_4 + 3x_3 + 4x_2 + 5x_1 + 4x_0 &\leq& 4.
		\end{eqnarray*}
		The only solutions to this equation not resulting in an empty graph, a graph with only one edge, or
		a non-integer number of crossings are
		\setcounter{equation}{0}
		\begin{eqnarray}
x_i &=& 1 \text{ for } i = 3,5 \text{ and } x_j = 0 \text{ for } j = 0,1,2,4\\
x_4 &=& 1 \text{ and } x_5 = 2 \text{ and } x_i = 0 \text{ for } i = 0,1,2,3\\
			x_4 &=& 2 \text{ and } x_i = 0 \text{ for } i = 0,1,2,3,5\\
			x_5 &=& 2 \text{ and } x_i = 0 \text{ for } i = 0,1,2,3,4\\
x_5 &=& 4 \text{ and } x_i = 0 \text{ for } i = 0,1,2,3,4.
		\end{eqnarray}
Solution~(1) implies that at least one edge has to self-intersect, which is not allowed in our present setting.
Any drawing realizing Solution~(3), i.e., two edges with four crossings each, is not multi-saturated as
		either edge can be crossed another time while every cell of the drawing must contain a vertex.
		Solutions~(2) and~(5) imply
connected drawings with as many cells as there are endpoints of edges.
		Hence, stashing in them is not possible.
Drawings realizing Solutions~(2) and~(5) can be seen in Figure~\ref{fig:matching5threeedges} and~\ref{fig:matching5fouredges} respectively.
		Finally, Solution~(4) leads to a drawing with five crossings and two edges per connected component.
		In Figure~\ref{fig:matching5twoedges} we give an example of a multi-saturated $6$-simple $5$-plane drawing realizing this solution.
		As two edges with five crossings each can never result in a $5$-simple drawing the lemma follows.
	\end{proof}

	\begin{figure}[t]
		\centering
		\begin{minipage}[t]{.48\textwidth}
			\includegraphics[page=2]{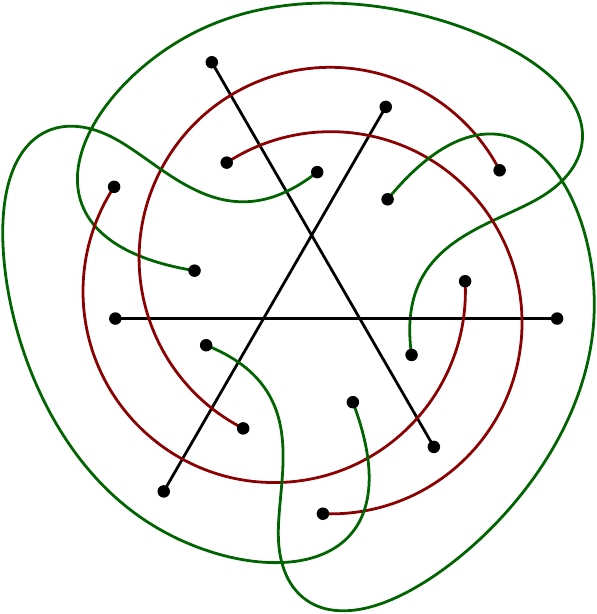}
			\caption{A multi-saturated simple $6$-plane drawing of a matching with only seven edges.}
			\label{fig:matching6}
		\end{minipage}
		\hfill
		\begin{minipage}[t]{.48\textwidth}
			\includegraphics[page=1]{figures/matching_6.pdf}
			\caption{A multi-saturated simple $6$-plane drawing of a matching with nine edges. 
			The edge colors indicate edges that behave symmetrically.}
			\label{fig:matching6nine}
		\end{minipage}

	\end{figure}

	\begin{theorem}
		\label{thm:matchings}
		There are no multi-saturated simple $k$-plane drawings of matchings for $k \in \{4,5\}$ on $n$ vertices,
		while such drawings exist for $k = 6$. 
		Moreover, for $k=6$ we can construct such drawings of arbitrarily large matchings.
	\end{theorem}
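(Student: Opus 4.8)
The plan is to handle the two impossibility statements ($k\in\{4,5\}$) and the existence statement ($k=6$) separately, drawing the impossibility parts almost directly from \Cref{lem:matching4,lem:matching5}. Those lemmas were phrased for \emph{arbitrarily large} matchings, but their proofs in fact enumerate every multi-saturated drawing of a matching (without self-intersections) that can exist at all, and I would reuse that enumeration verbatim. The one extra ingredient I need is a simple-drawing observation: since in a simple drawing two edges share at most one point, an edge crossed $c$ times meets $c$ distinct other edges, so the connected piece of the arrangement containing it consists of at least $c+1$ edges.

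Applying this to $k=4$: by \Cref{lem:matching4} the only multi-saturated candidate other than the empty graph or a single edge is $x_4=2$, a two-edge arrangement in which each edge is crossed four times; but a four-times-crossed edge needs an arrangement of at least five edges, so this drawing cannot be simple. The empty graph and a single edge are not multi-saturated, as a disjoint parallel copy can always be inserted crossing-free; hence no multi-saturated simple $4$-plane drawing of a matching exists. For $k=5$, \Cref{lem:matching5} leaves exactly Solutions~(2), (4), and~(5), which realize arrangements of three, two, and four edges respectively, each containing an edge crossed five times. Since a five-times-crossed edge forces at least six edges in its arrangement piece, none of these can be simple, and the same degenerate-case remark disposes of the trivial drawings. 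This settles both impossibility claims.

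For $k=6$ the statement is constructive and I would proceed in two stages. For bare existence it suffices to exhibit the finite drawing of \Cref{fig:matching6} and to verify, by a bounded inspection, that it is simple, $6$-plane, and admits no insertable edge. To reach arbitrarily large matchings I would use the symmetric gadget of \Cref{fig:matching6nine} as a repeatable block: consecutive copies are glued along a common interface so that every interface edge ends up crossed exactly six times (hence saturated) and no cell acquires two vertices. Simplicity and $6$-planarity are local conditions preserved under this gluing, so the work reduces to re-establishing multi-saturation for the whole family.

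The hard part will be multi-saturation, in particular the parallel-edge requirement special to this setting. I would argue it in two layers: first, that the planarization has no cell containing two vertices, so no crossing-free edge can be added; and second, that every cell is bounded solely by saturated edges, so any curve leaving a cell immediately over-crosses one of them. The delicate point is ruling out a cleverly routed simple edge between two prescribed endpoints---including the two endpoints of an already-drawn edge, which must therefore also be separated by saturated walls---while it is permitted to cross each wall once; I would show that the crossing budget of $6$ is exhausted before such a curve can reach its target, and that the gluing keeps every vertex caged by saturated walls so that this obstruction is inherited by every member of the infinite family. By contrast, the $k\in\{4,5\}$ halves are immediate corollaries of the two preceding lemmas.
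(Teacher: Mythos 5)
Your proposal is correct and follows the paper's own route: cite \Cref{lem:matching4,lem:matching5} for $k\in\{4,5\}$ and exhibit the drawings of \Cref{fig:matching6} and \Cref{fig:matching6nine} for $k=6$. The one genuine difference is on the impossibility side, where your observation that in a simple drawing an edge with $c$ crossings forces at least $c+1$ edges lets you kill \emph{every} nontrivial solution of the Euler-formula inequalities outright --- in particular the small connected configurations (e.g.\ $x_4=2$ for $k=4$, and Solutions~(2) and~(5) for $k=5$) that the paper's lemmas only exclude for \emph{arbitrarily large} matchings via the no-free-cell argument. This buys a slightly stronger and cleaner statement (non-existence at every size in the simple setting) at essentially no cost, whereas the paper leans on ``follows directly from the lemmas.'' For $k=6$ your argument is at the same level of rigor as the paper's: the paper too only verifies the seven-edge drawing by inspection (every vertex caged in a cell bounded by six-times-crossed edges) and defers the arbitrarily-large construction to a one-sentence remark about exploiting the symmetry of the nine-edge drawing; your gluing plan with its acknowledged obligation to re-verify multi-saturation of the interface is exactly the part the paper also leaves unproven, so no gap relative to the source.
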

	\begin{proof}
		The claims for $k=4$ and~$5$ follow directly from Lemmas~\ref{lem:matching4} and~\ref{lem:matching5}.

		Figure~\ref{fig:matching6} shows an example of a matching of seven edges drawn with exactly six crossings per edge.
		Clearly, the drawing is also simple as no two edges cross each other more than once.
		Moreover, every vertex lies in a distinct cell and no two lie in the same cell.
		Let $u$ be an arbitrary vertex of the drawing, 
		then we cannot add an edge between $u$ and any other vertex in the drawing 
		without crossing the cell boundaries of the cell containing $u$.
		Yet, each edge on that boundary is already crossed six times.
	\end{proof}

	The matching in Figure~\ref{fig:matching6} is also as sparse as possible, 
	matching the lower bound by Chaplick et al.~\cite{chaplickEdgeMinimumSaturated20}.
	In Figure~\ref{fig:matching6nine} we show a fully symmetric saturated simple $6$-plane drawing of a matching with nine edges.
	Exploiting its symmetry one can generate saturated simple $6$-plane drawings of arbitrarily large matchings without
	relying on stashing (i.e. the planarization is a connected graph).

	\section{Conclusion}
	\label{sec:conclusion}
	With this paper we initiated the study of saturated drawings in the context of $k$-planarity.
	We presented constructions depending on the simplicity of the drawing and 
	translated results from the study of maximal $1$- and $2$-planar graphs.
	The, in our opinion, the most interesting open problems are 
	tightening the bounds for saturated $k+1$-simple and simple $k$-plane drawings.
	In particular, achieving a tight bound for saturated simple $1$-plane drawings.
	Our results show that as we reduce the simplicity the number of edges we require to construct
	saturated $k$-plane drawings seems to increase.

\bibliography{saturated_k_planar.bib}

\begin{thebibliography}{10}
\providecommand{\url}[1]{\texttt{#1}}
\providecommand{\urlprefix}{URL }
\providecommand{\doi}[1]{https://doi.org/#1}

\bibitem{aak-eotsp-01a}
Aichholzer, O., Aurenhammer, F., Krasser, H.: Enumerating order types for small
  point sets with applications. Order  \textbf{19}(3),  265--281 (2002).
  \doi{10.1023/A:1021231927255}

\bibitem{aichholzerEdgeVertexRatio2019}
Aichholzer, O., Kleist, L., Klemz, B., Schr{\"o}der, F., Vogtenhuber, B.: On
  the edge-vertex ratio of maximal thrackles. In: Archambault, D., T{\'o}th,
  C.D. (eds.) Proceedings of the 27th {{International Symposium}} on {{Graph
  Drawing}} and {{Network Visualization}} ({{GD}}'19). {{LNCS}}, vol. 11904,
  pp. 482--495. {Springer} (2019). \doi{10.1007/978-3-030-35802-0\_37}

\bibitem{auerSparseMaximal2013}
Auer, C., Brandenburg, F.J., Glei\ss~ner, A., Hanauer, K.: On sparse maximal
  2-planar graphs. In: Didimo, W., Patrignani, M. (eds.) Proceedings of the
  20th {{International Symposium}} on {{Graph Drawing}} ({{GD}}'12). {{LNCS}},
  vol.~7704, pp. 555--556. {Springer} (2013).
  \doi{10.1007/978-3-642-36763-2\_50}

\bibitem{baratImprovementsDensity2018}
Bar{\'a}t, J., T{\'o}th, G.: Improvements on the density of maximal 1-planar
  graphs. Journal of Graph Theory  \textbf{88}(1),  101--109 (2018).
  \doi{10.1002/jgt.22187}

\bibitem{brandenburgDensityMaximal2013}
Brandenburg, F.J., Eppstein, D., Glei{\ss}ner, A., Goodrich, M.T., Hanauer, K.,
  Reislhuber, J.: On the density of maximal 1-planar graphs. In: Didimo, W.,
  Patrignani, M. (eds.) Proceedings of the 20th {{International Symposium}} on
  {{Graph Drawing}} ({{GD}}'12). {{LNCS}}, vol.~7704, pp. 327--338. {Springer}
  (2013). \doi{10.1007/978-3-642-36763-2\_29}

\bibitem{chaplickEdgeMinimumSaturated20}
Chaplick, S., Rollin, J., Ueckerdt, T.: Edge-minimum saturated k-planar
  drawings. CoRR  \textbf{abs/2012.08631} (2020),
  \url{https://arxiv.org/abs/2012.08631}

\bibitem{didimoSurveyGraph2019}
Didimo, W., Liotta, G., Montecchiani, F.: A survey on graph drawing beyond
  planarity. ACM Computing Surveys  \textbf{52}(1),  4:1--4:37 (2019).
  \doi{10.1145/3301281}

\bibitem{hajnalSaturatedSimple2018}
Hajnal, P., Igamberdiev, A., Rote, G., Schulz, A.: Saturated simple and
  2-simple topological graphs with few edges. Journal of Graph Algorithms and
  Applications  \textbf{22}(1),  117--138 (2018). \doi{10.7155/jgaa.00460}

\bibitem{hongBeyondPlanarGraphs2017}
Hong, S.H., Kaufmann, M., Kobourov, S.G., Pach, J.: Beyond-planar graphs:
  Algorithmics and combinatorics ({Dagstuhl} seminar 16452). Dagstuhl Reports
  \textbf{6}(11),  35--62 (2017). \doi{10.4230/DagRep.6.11.35}

\bibitem{kobourovAnnotatedBibliography2017}
Kobourov, S.G., Liotta, G., Montecchiani, F.: An annotated bibliography on
  1-planarity. Computer Science Review  \textbf{25},  49--67 (2017).
  \doi{10.1016/j.cosrev.2017.06.002}

\bibitem{kynclSaturatedSimple2015}
Kyn{\v c}l, J., Pach, J., Radoi{\v c}i{\'c}, R., T{\'o}th, G.: Saturated simple
  and k-simple topological graphs. Computational Geometry: Theory and
  Applications  \textbf{48}(4),  295--310 (2015).
  \doi{10.1016/j.comgeo.2014.10.008}

\end{thebibliography}
	
	\clearpage
	\appendix

	\section{Deriving Euler's formula for disconnected plane graphs}
	\label{apx:euler}
	For completeness, we show in this section how to derive the version of Euler's formula 
	used in the proof of the lower bound of \Cref{thm:selfcrossings}.
	Let $\mathcal G = (P,C)$ be an embedded planar graph with $\gamma' > 0$ many connected components and
	$f$ the number of faces in $\mathcal G$.
	Usually Euler's formula is stated as 
	\begin{eqnarray*}
		|P| - |C| + f = 2
	\end{eqnarray*}
	and it is assumed that the given graph is simple and connected.
	In our case $\mathcal G$ might be neither simple nor connected.
	More precisely, we allow a vertex in $\mathcal G$ to have at most one self-loop or 
	be involved in one pair of multi-edges.
We count these self-loops and multiple edges by adding one to the number of faces for each occurrence of either.
	As a result, the number of edges introduced per self-loop and multiple edge just cancels with the additionally added faces.
	
	To handle the connected components of $\mathcal G$ note that
	every such component is itself a connected embedded planar graph.
	Let $P_i$, $C_i$, and $f_i$ be the vertices, edges, and the number of faces 
	for the $i$-th connected component of $\mathcal G$, then $|P_i| - |C_i| + f_i = 2$ holds for each $1 \leq i \leq \gamma'$.
	Summing over all $i$ we get
	\begin{eqnarray*}
		\sum_{i = 1}^{\gamma'}|P_i| - \sum_{i = 1}^{\gamma'}|C_i| + \sum_{i = 1}^{\gamma'}f_i = 2\gamma'.
	\end{eqnarray*}
	Note that since we count the outer face for each component we have
	\begin{eqnarray*}
		\sum_{i = 1}^{\gamma'}f_i = f + (\gamma' - 1).
	\end{eqnarray*}
	Consequently we obtain
	\begin{eqnarray*}
		\sum_{i = 1}^{\gamma'}|P_i| - \sum_{i = 1}^{\gamma'}|C_i| + \sum_{i = 1}^{\gamma'}f_i &=& 2\gamma'\\
		|P| - |C| + f + (\gamma' - 1) &=& 2\gamma' \\
		|P| - |C| + f &=& \gamma' + 1.
	\end{eqnarray*}

\end{document}